\documentclass[twocolumn, 11pt]{article}
\usepackage[T1]{fontenc}
\usepackage[utf8]{inputenc}

\usepackage{amsmath,amssymb,amsthm}
\usepackage{physics}

\usepackage{graphicx}
\usepackage{subcaption}

\usepackage{comment}
\usepackage{color}

\usepackage[colorlinks=true, linkcolor=blue, urlcolor=blue, citecolor=blue]{hyperref}

\usepackage{titlesec}
\usepackage[
  top=1.8cm,
  bottom=2.0cm,
  left=1.6cm,
  right=1.6cm,
  columnsep=0.6cm
]{geometry}

\renewcommand{\thesection}{\Roman{section}}
\renewcommand{\thesubsection}{\Alph{subsection}}

\titleformat{\section}
  {\normalfont\normalsize\bfseries\centering} 
  {\thesection.}{1em}{\MakeUppercase}

\titleformat{\subsection}
  {\normalfont\normalsize\bfseries\centering}
  {\thesubsection.}{1em}{}

\titlespacing*{\section}{0pt}{2.5em}{0.2em}
\titlespacing*{\subsection}{0pt}{2.5em}{0.4em}

\newenvironment{nalign}{
    \begin{equation}
    \begin{aligned}
}{
    \end{aligned}
    \end{equation}
    \ignorespacesafterend
}

\newtheorem{theorem}{Theorem}
\newtheorem{lemma}{Lemma}

\newtheorem{property}{Property}

\newcommand\blfootnote[1]{%
  \begingroup
  \renewcommand\thefootnote{}\footnote{#1}%
  \addtocounter{footnote}{-1}%
  \endgroup
}
\usepackage[
  backend=biber,
  sorting=none,
  style=numeric-comp,
  giveninits=true,
  maxnames=99,
  doi=true,
  url=false,
  isbn=false,
  eprint=false
]{biblatex}

\DefineBibliographyStrings{english}{
  andothers = {\mkbibemph{et\ al.}}
}

\DeclareNameAlias{author}{given-family}

\renewbibmacro*{volume+number+eid}{%
  \printfield{volume}%
  \iffieldundef{number}
    {}
    {\mkbibparens{\printfield{number}}}%
  \iffieldundef{eid}
    {}
    {\setunit{\addcomma\space}\printfield{eid}}%
}

\AtEveryBibitem{%
  \clearfield{month}%
  \clearfield{day}%
  \clearfield{note}%
  \clearlist{language}%
}

\newbibmacro*{linktail}{%
  \iffieldundef{doi}
    {}
    {\href{https://doi.org/\thefield{doi}}}%
}

\DeclareBibliographyDriver{article}{%
  \printnames{author}%
  \setunit{\addcomma\space}%
  \printfield{title}%
  \setunit{\addcomma\space}%
  \usebibmacro{linktail}{%
    \printfield{journaltitle}%
    \setunit{\addcomma\space}%
    \usebibmacro{volume+number+eid}%
    \setunit{\addcomma\space}%
    \printdate%
  }%
  \finentry
}

\DeclareBibliographyDriver{inproceedings}{%
  \printnames{author}%
  \setunit{\addcomma\space}%
  \printfield{title}%
  \setunit{\addcomma\space}%
  \usebibmacro{linktail}{%
    \printfield{booktitle}%
    \setunit{\addcomma\space}%
    \printlist{publisher}%
    \setunit{\addcomma\space}%
    \usebibmacro{volume+number+eid}%
    \setunit{\addcomma\space}%
    \printdate%
  }%
  \finentry
}

\DeclareBibliographyDriver{incollection}{%
  \printnames{author}%
  \setunit{\addcomma\space}%
  \printfield{title}%
  \setunit{\addcomma\space}%
  \usebibmacro{linktail}{%
    \printfield{booktitle}%
    \setunit{\addcomma\space}%
    \printlist{publisher}%
    \setunit{\addcomma\space}%
    \usebibmacro{volume+number+eid}%
    \setunit{\addcomma\space}%
    \printdate%
  }%
  \finentry
}

\DeclareBibliographyDriver{book}{%
  \printnames{author}%
  \setunit{\addcomma\space}%
    \usebibmacro{linktail}{%
  \printfield{title}%
  \iffieldundef{edition}
    {}
    {\setunit{\addspace}\printfield{edition}\addspace ed.}%
  \setunit{\addcomma\space}%
    \printlist{publisher}%
    \setunit{\addcomma\space}%
    \printdate%
  }%
  \finentry
}

\DeclareBibliographyDriver{thesis}{%
  \printnames{author}%
  \setunit{\addcomma\space}%
  \printfield{title}%
  \setunit{\addcomma\space}%
  \printfield{institution}%
  \setunit{\addcomma\space}%
  \printdate%
  \iffieldundef{doi}
    {}
    {\href{https://doi.org/\thefield{doi}}{}}%
  \finentry
}

\begin{document}
\twocolumn[{
\begin{@twocolumnfalse}

\vspace*{-1.2em}

\begin{center}
{\Large\bfseries Quantum Lattice Boltzmann with Denoising Collision Operators\par}
\vspace{0.6em}

\begin{minipage}{0.88\textwidth} 
\centering

{
Trong Duong$^{1}$, Matthias M\"oller$^{2}$, Norbert Hosters$^{1}$\par
}
\vspace{0.35em}

{\small \itshape
$^{1}$Chair for Computational Analysis of Technical Systems, RWTH Aachen University, \\Schinkelstrasse 2, 52062 Aachen, Germany\par
$^{2}$Delft Institute of Applied Mathematics, Delft University of Technology,\\ Mekelweg 4, 2628 CD Delft, The Netherlands\par
}

\vspace{0.7em}

\begin{center}
\begin{minipage}{\textwidth}
\noindent\textbf{}
The Lattice Boltzmann method (LBM) is a well-established mesoscopic approach for simulating fluid dynamics by evolving particle distribution functions on discrete lattices. While the LBM is highly parallelizable on classical hardware, its translation to quantum algorithms is impeded by the collision process, which is intrinsically nonlinear and irreversible. Several existing quantum formulations implement this process through repeated quantum tomography and state preparation at every timestep, leading to significant overheads. We introduce a quantum LBM based on a denoising-type collision operator that avoids tomography-based updates. The collision dynamics are reformulated as an orthogonal projection onto the linearized manifold of equilibrium distributions around a reference state. This geometric approach filters non-equilibrium components while preserving lattice symmetries and approximating nonlinear terms needed to recover hydrodynamic behavior. A complete pipeline is presented with efficient gate-level realizations, incorporating encoding of distributions, collision, streaming, boundary conditions, and measurement of physical quantities such as hydrodynamic forces. In addition, we outline an approach for implementing projector-based operators deterministically without postselection, paving the way to fully coherent multi-timestep LBM simulations. Numerical experiments for advection-diffusion and flow problems demonstrate that the method reproduces macroscopic behaviors with high accuracy, with performance depending on the choice of reference state.
\end{minipage}
\end{center}

\end{minipage}
\end{center}

\vspace{0.9em}

\end{@twocolumnfalse}
}]

\blfootnote{The implementation of the algorithm described in this paper is publicly available at \url{https://github.com/MyEntangled/denoise_qlbm}; an archived version is available on Zenodo at \url{https://doi.org/10.5281/zenodo.19482608}.}

\section{Introduction}
Simulating flows is a central topic in computational science, with applications across aerodynamics \cite{slotnick_cfd_2014, spalart_role_2016, cary_cfd_2021}, geophysics \cite{eyring_overview_2016, satoh_global_2019}, biological transport \cite{quarteroni_mathematical_2019, marsden_computational_2015, ranno_computational_2025}, and industrial engineering \cite{fletcher_future_2022, sarjito_cfd-based_2021, raynal_cfd_2016}. Among the many numerical approaches developed for this purpose, the Lattice Boltzmann method (LBM) has emerged as a particularly powerful mesoscopic framework \cite{shu_special_2022, petersen_lattice_2021, aidun_lattice-boltzmann_2010}. By evolving discrete particle distribution functions $f_i(\mathbf{x},t)$ on a lattice through alternating collision and streaming steps, LBM combines algorithmic simplicity with scalable parallelism and geometric flexibility. These properties have made LBM competitive with solvers for the Navier-Stokes equations in some regimes. At the same time, the rapid development of quantum computing has motivated new interest in reformulating fluid solvers in a quantum-compatible manner, with the prospect of utilizing quantum parallelism and state superposition to enhance the scalability of flow solvers.

Quantum algorithms for partial differential equations have demonstrated, at least in principle, asymptotic advantages for linear systems \cite{harrow_quantum_2009, childs_quantum_2017, ambainis_variable_2012} and Hamiltonian simulation \cite{lloyd_universal_1996, berry_exponential_2014, berry_simulating_2015, low_optimal_2017} under certain assumptions. However, quantum algorithms for fluid dynamics simulations still face immense difficulties due to intrinsic nonlinearity and complex dynamics of flows \cite{sanavio_quantum_2024, meng_quantum_2023, gaitan_finding_2020}. In this context, LBM provides unique properties: its statistical interpretation, locality, and linear streaming dynamics are compatible with unitary evolutions on subsystems and simple quantum walks, suggesting that LBM may facilitate a quantum approach to fluid simulation.

This observation has motivated the development of the quantum Lattice Boltzmann method (QLBM), in which the classical distribution functions are encoded into quantum states and evolved using quantum circuits. Early conceptual work established the connection between Lattice Gas Automata, the Lattice Boltzmann method, and quantum walks \cite{yepez_quantum_2001, meyer_quantum_1996, yepez_quantum_2001-1, succi_quantum_2015}. Current studies have formulated quantum algorithms for collisionless dynamics \cite{todorova_quantum_2020}, linearized Boltzmann equations \cite{lee_multiple-circuit_2026, itani_quantum_2024, zeng_quantum_2025, kumar_decomposition_2024}, advection-diffusion problems \cite{budinski_quantum_2021, wawrzyniak_linearized_2025}, and simplified hydrodynamic models \cite{zamora_float_2025, kocherla_fully_2024, wang_quantum_2025, georgescu_quantum_2025, georgescu_fully_2026,  lacatus_surrogate_2025}, often highlighting the conceptual elegance and quantum compatibility of the methods. Relevant research also addresses complex geometry \cite{ji_iga-lbm_2025}, software implementation \cite{georgescu_qlbm_2025, shinde_utilizing_2025}, and scalability of QLBM algorithms \cite{jennings_simulating_2026, turro_practical_2025}.

A major line of research has focused on extending QLBM to incorporate the nonlinear collision step. One class of approaches employs Carleman linearization to embed the nonlinear collision term into a higher-dimensional linear system \cite{itani_quantum_2024}, which can then be simulated with linear operations. This strategy can handle moderately nonlinear flows over multiple time steps and has been explored in a range of algorithmic and circuit-level implementations \cite{sanavio_lattice_2024, zamora_quantum_2026}. In practice, however, manipulating the resulting linear system requires deep circuits with complex, multi-qubit operators \cite{sanavio_quantum_2024, sanavio_lattice_2024} as well as post-selection steps with very low success probability \cite{sanavio_carleman-lattice-boltzmann_2025}. Alternatively, other works have considered lattice-gas or surrogate models \cite{wang_quantum_2025, kocherla_fully_2024, georgescu_quantum_2025, lacatus_surrogate_2025} for the collision step, but these approaches do not guarantee an approximation to the Lattice Boltzmann equation.

Despite this progress, fundamental limitations in treating the collision step remain due to its intrinsic nonlinearity and irreversibility. Many existing quantum formulations address this difficulty by resorting to repeated quantum state tomography, mid-circuit measurements, or explicit re-preparation of quantum states at every time step. While these strategies allow one to reintroduce nonlinearity at the algorithmic level, they also incur substantial measurement overhead and loss of coherence, which severely limit scalability. Even approaches based on Carleman linearization face rapidly growing resource requirements or restrictive assumptions on the flow regime. As a result, a fully coherent, measurement-free QLBM capable of reproducing nonlinear hydrodynamic behavior over multiple time steps remains an open problem.

The central gap, therefore, lies in the absence of a collision mechanism that is simultaneously compatible with unitary quantum evolution, avoids tomography-based updates, and retains sufficient nonlinear structure to recover correct macroscopic dynamics. Addressing this gap requires rethinking the role of collision in QLBM beyond a direct quantum translation of its classical counterpart.

In this paper, we propose a quantum Lattice Boltzmann method that replaces tomography-based state updates with a denoising-inspired collision step. Instead of reconstructing the full state, we treat collision as a projection onto a local tangent space of the equilibrium manifold around a reference state. In this way, components that are orthogonal to the tangent space are filtered out. With a suitable reference state, this projection acts like a denoising procedure that drives the initial state toward the corresponding equilibrium state. By alternating applications of this collision operator with the linear streaming step, the QLBM evolution can be carried out over multiple time steps without quantum tomography or state re-preparation (Fig.~\ref{fig:LBM_workflow}).

The remainder of this paper is organized as follows. Section \ref{sec:lbm_background} reviews the standard LBM formulation. Section \ref{sec:qlbm} introduces the geometric interpretation underlying the proposed collision operator and analyzes the approach's symmetry and errors. We also present a complete QLBM pipeline, including quantum encoding of distribution functions, unitary implementations of collision and streaming, treatment of boundary conditions, and measurement of physical observables such as hydrodynamic forces. Furthermore, we outline a route to implementing projector-based collision operators deterministically via double-bracket evolution, enabling efficient multi-timestep QLBM simulations. Section \ref{sec:numerics} presents numerical results and accuracy analyses for flow and advection-diffusion test problems. Finally, Section \ref{sec:conclusion} concludes with a summary and outlook.

\begin{figure}[h!]
    \centering
    \includegraphics[width=\linewidth]{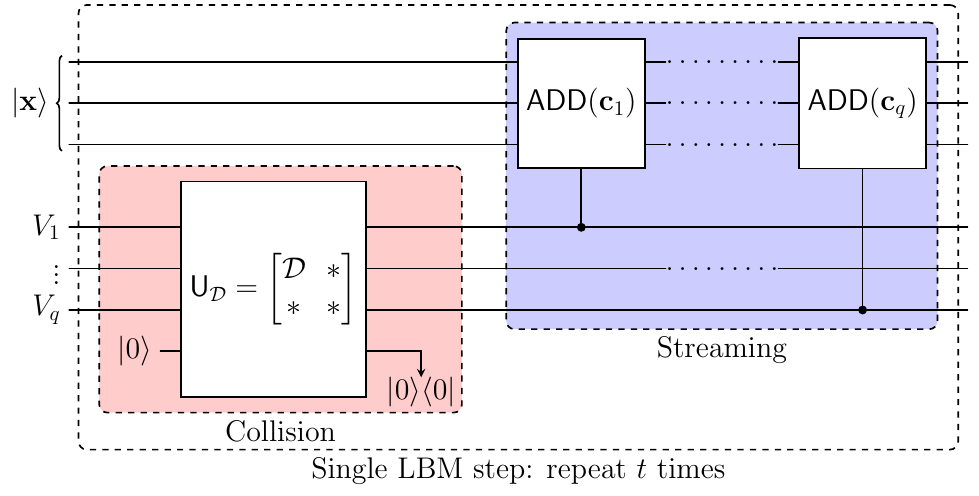}
    \caption{Diagram for standard QLBM simulations over $t$ timesteps without boundary treatment.}
    \label{fig:LBM_workflow}
\end{figure}

\section{Lattice Boltzmann Method}
\label{sec:lbm_background}
The Lattice Boltzmann Method (LBM) simulates fluid dynamics by tracking the evolution of particle distribution functions on a discrete lattice. Unlike traditional Computational Fluid Dynamics (CFD) methods that discretize the macroscopic Navier-Stokes equations, LBM discretizes the mesoscopic Boltzmann kinetic equation. This section reviews the standard LBM with the Bhatnagar\textendash Gross\textendash Krook (BGK) collision model, which serves as the classical benchmark for our quantum formulation.

We consider a $d$-dimensional lattice with $q$ discrete velocity vectors, denoted as the D$d$Q$q$ lattice. At each lattice node $\mathbf{x}$ and time step $t$, the fluid state is described by a set of populations $f_i(\mathbf{x}, t)$ that represent the particle distribution function with respect to discrete velocities $\{\mathbf{c}_i\}_{i=1}^{q}$.

In the standard LBM formulation, with lattice units $\Delta x = 1$ and $\Delta t = 1$, the time evolution of these populations is governed by the BGK Lattice Boltzmann equation,
\begin{equation}
f_i(\mathbf{x} + \mathbf{c}_i, t + 1) = f_i(\mathbf{x}, t) - \frac{1}{\tau} \left[ f_i(\mathbf{x}, t) - f_i^{\text{eq}}(\mathbf{x}, t) \right],
\label{eq:lbm_evolution}
\end{equation}
where $f_i^{\text{eq}}$ is the local equilibrium distribution and $\tau$ is the relaxation time, related to the kinematic viscosity by $\nu = c_s^2 (\tau - 0.5)$, where $c_s$ represents the lattice speed of sound with typical value $c_s = 1/\sqrt{3}$ for common LBM lattices.

The equilibrium distribution $f_i^{\text{eq}}$ is central to the LBM. It is the steady distribution under fixed macroscopic values and depends solely on the local density $\rho$ and velocity $\mathbf{u}$, which are obtained from the moments of the distributions:
\begin{nalign}
\rho(\mathbf{x},t) &= \sum_{i} f_i(\mathbf{x},t), \\
\rho(\mathbf{x},t) \mathbf{u}(\mathbf{x},t) &= \sum_{i} \mathbf{c}_i f_i(\mathbf{x},t)
\label{eq:moments}
\end{nalign}
Using these moments, $f_i^{\text{eq}}$ is calculated via a second-order expansion valid for low Mach numbers, i.e., $|\mathbf{u}|/c_s \ll 1$, by
\begin{equation}f_i^{\text{eq}}(\rho, \mathbf{u}) \approx w_i \rho \left( 1 + \frac{\mathbf{c}_i \cdot \mathbf{u}}{c_s^2} + \frac{(\mathbf{c}_i \cdot \mathbf{u})^2}{2c_s^4} - \frac{\|\mathbf{u}\|^2}{2c_s^2} \right),
\label{eq:equilibrium}
\end{equation}
where $w_i$ is the lattice-specific weight corresponding to the population with velocity $\mathbf{c}_i$. Crucially, the quadratic velocity terms in Eq.~\eqref{eq:equilibrium} are what allow the LBM to recover the nonlinear physics of fluid flow.

Algorithmically, an LBM algorithm based on Eq.~\eqref{eq:lbm_evolution} is executed in three recurring steps: 
\begin{enumerate} 
\item \textit{Collision:} A non-linear process occurs independently at every lattice node simultaneously, in which the populations at each node relax toward a local equilibrium: 
\begin{equation}
\label{eq:lbm_collision}
f_i^{*}(\mathbf{x}, t) = \left(1 - \frac{1}{\tau}\right) f_i(\mathbf{x}, t) + \frac{1}{\tau} f_i^{\text{eq}}(\mathbf{x}, t)
\end{equation} 
The zeroth moment (density $\rho$) and the first moment (momentum $\rho \mathbf{u}$) are invariant during collision.
\item \textit{Streaming:} The post-collision populations advect to neighboring nodes according to their velocity vectors: 
\begin{equation} 
\label{eq:lbm_streaming}
f_i(\mathbf{x} + \mathbf{c}_i, t + 1) = f_i^{*}(\mathbf{x}, t). \end{equation} 
\item \textit{Boundary conditions:} Local populations evolve upon encountering the boundary. The evolution depends on the type of boundary condition.
\end{enumerate}

 It is important to note that while the streaming step consists of simple data shifts, the calculation of $f_i^{\text{eq}}$ in Eq.~\eqref{eq:equilibrium} during the collision step requires nonlinear arithmetic for quadratic terms. This nonlinearity represents the primary barrier for linear quantum evolutions and will be our main focus in this paper.

\section{Quantum Algorithm}
\label{sec:qlbm}
This section focuses on designing an implementation pipeline for LBM simulation on quantum computers. To begin with, we will propose an amplitude-based quantum encoding of the populations. The following subsections will introduce the construction and the gate-level implementation for collision, streaming, and boundary conditions, which occur alternately in a standard LBM simulation.

The particle distribution functions are encoded into a quantum state $\ket{\psi}$ within a composite Hilbert space $\mathcal{H} = \mathcal{H}_D \, \otimes \, \mathcal{H}_V$, comprising a spatial domain register $D$ and a velocity register $V$. The global state is defined as
\begin{equation}
\ket{\psi} = \sum_{\mathbf{x}} \sqrt{\rho(\mathbf{x})} \ket{\mathbf{x}}_{D} \ket{\psi_{\mathbf{x}}}_{V},
\label{eq:encoding}
\end{equation}
where the local state $\ket{\psi_{\mathbf{x}}}$ encapsulates the normalized distribution at lattice site $\mathbf{x}$:
\begin{equation}
\ket{\psi_{\mathbf{x}}} = \frac{1}{\sqrt{\rho(\mathbf{x})}} \sum_{i=1}^{q} \sqrt{f_i(\mathbf{x})} \ket{\mathbf{e}_i}.
\label{eq:local_state}
\end{equation}
Here, $\{ \ket{\mathbf{e}_i} \}_{i=1}^q$ denotes a specific subset of the computational basis vectors in $\mathcal{H}_V$ (where $\dim(\mathcal{H}_V) = 2^q$). These vectors correspond to the one-hot binary strings representing the $q$ discrete velocities. The normalization of $\ket{\psi}$ implies $\sum_{\mathbf{x}} \rho(\mathbf{x}) = 1$, consistent with the conservation of total fluid mass across the lattice.

\subsection{Collision}
Implementing the collision step poses a challenge for quantum LBM algorithms because collision-driven relaxation is nonlinear and irreversible. This part will introduce a quantum operator that approximates the collision step (Eq.~\eqref{eq:lbm_collision}) in the full relaxation regime ($\tau = 1$). The operator is analytically constructible based on an orthogonal projection onto the local expansion of the so-called equilibrium manifold. This approach is called denoising-based collision and is the key contribution of this article. The method is applicable to quantum LBM formulations for both fluid dynamics and advection-diffusion problems.

Applying $\sqrt{1+\varepsilon} \approx 1 + \frac{\varepsilon}{2} - \frac{\varepsilon^2}{8}$ to the equilibrium distribution in Eq.~\eqref{eq:equilibrium} yields
\begin{equation}
\begin{split}
\sqrt{f_i^{\text{eq}}} 
\approx \sqrt{\rho w_i} \left(1 + \frac{\mathbf c_i\cdot \mathbf u}{2c_s^2}
+ \frac{(\mathbf c_i\cdot \mathbf u)^2}{8c_s^4}
- \frac{\|\mathbf u\|^2}{4c_s^2} \right) \\ 
+ \mathcal{O}( \|\mathbf{u}\|^3).
\end{split}
\end{equation}
Therefore, the corresponding amplitude vector $\sqrt{\mathbf{f}^{\text{eq}}} = \left(\sqrt{f_1^{\text{eq}}}, \dots, \sqrt{f_q^{\text{eq}}}\right)$ belongs, up to $\mathcal{O}(\|\mathbf{u}\|^3)$, to the subspace $\mathcal{V} = \mathrm{span}_{\mathbb{R}}\{\mathbf{v}_0, \{\mathbf{v}_{k}\}_{k=1}^d, \{\mathbf{v}_{kl}\}_{1\leq k\leq l \leq d} \}$ of $\mathbb{R}^q$, whose basis vectors are defined entry-wise by
\begin{equation}
\begin{split}
    (\mathbf{v}_0)_i &= \sqrt{w_i}, \\ (\mathbf{v}_k)_i &= \frac{c_{ik}}{c_s}\sqrt{w_i}, \\ (\mathbf{v}_{kl})_i &= \left(\frac{c_{ik} c_{il} - \delta_{kl} \, c_s^2}{\sqrt{1+\delta_{kl}} \, c_s^2}\right)\sqrt{w_i},
\end{split}
\label{eq:hermite-basis}
\end{equation}
where $\delta_{kl}$ is the Kronecker delta. Under the isotropy conditions\footnotemark satisfied by common LBM lattices, these basis vectors form an orthonormal set.\footnotetext{Common LBM lattices satisfy isotropy conditions up to fifth order: $\sum_i w_i = 1$; $\sum_i w_i c_{ik} c_{il} = c_s^2 \delta_{kl}$; $\sum_i w_i c_{ik} c_{il} c_{ip} c_{iq} = c_s^4 (\delta_{kl}\delta_{pq} + \delta_{kp}\delta_{lq} + \delta_{kq}\delta_{lp})$; and all moments of orders $1$, $3$, and $5$ vanish.} The dimension of the subspace is $d_{\mathcal V} = (d+1)(d+2)/2$. The amplitude vector decomposes in this subspace as $\sqrt{\mathbf{f}^{\text{eq}}} = \sqrt{\rho} \mathbf{h}(\mathbf{u}) + \mathcal{O}(\|\mathbf{u}\|^3)$ with
\begin{equation}
    \begin{split}
        \mathbf{h}(\mathbf{u}) = \left( 1 - \frac{\|\mathbf{u}\|^2}{8c_s^2}\right) \mathbf{v}_0 + \sum_{k} \frac{u_k}{2c_s} \mathbf{v}_{k} \\ 
        + \sum_{k\leq l} \frac{u_{k} u_{l}}{4\sqrt{1+\delta_{kl}} \, c_s^2} \mathbf{v}_{kl}
    \end{split}
    \label{eq:velocity_component}
\end{equation}

In the standard LBM simulation workflow, the collision and streaming steps occur in succession. For laminar flows, we interpret the streaming step as a perturbation that displaces the state vector from the equilibrium state resulting from the prior collision step. The subsequent collision can therefore function as a \textit{denoising} operator mapping the perturbed state to the equilibrium. Let $\sqrt{\mathbf{f}} \in \mathbb{R}^q$ represent the post-streaming amplitude vector. Ideally, we seek a linear operator $\mathcal{D}^*$ that minimizes the residual distance $\| D \sqrt{\mathbf{f}} - \sqrt{\mathbf{f}^{\text{eq}}(\rho^{\text{str}}, \mathbf{u}^{\text{str}})} \|$, recovering the equilibrium state defined by the post-streaming macroscopic quantities $(\rho^{\text{str}}, \mathbf{u}^{\text{str}})$.

Since these target quantities are unknown a priori, we approximate $\mathcal{D}^*$ by a computable \textit{denoising operator} $\mathcal{D}$. We define $\mathcal{D}$ as the orthogonal projector onto the local linearization of the equilibrium manifold $\mathcal{M}$, an embedded manifold in $\mathbb{R}^q$ parameterized by $\mathbf{g}(\rho,\mathbf{u}) = \sqrt{\rho} \, \mathbf{h}(\mathbf{u})$. The local tangent space of the manifold is determined by the partial derivatives, $T_{({\rho}, {\mathbf{u}})}\mathcal{M} = \mathrm{span}\{\partial_{\rho} \mathbf{g}, \partial_{u_1} \mathbf{g}, \dots, \partial_{u_d} \mathbf{g}\}$. The scale invariance $T_{(\rho, \mathbf{u})}\mathcal{M} = T_{(c\rho, \mathbf{u})}\mathcal{M}$ for $c>0$ implies that the tangent space is fully determined by the velocity and independent of the density. Hence, we can construct a basis for this subspace solely using a reference velocity $\hat{\mathbf{u}}$:
\begin{align}
\partial_\rho \mathbf{g}(\rho, \hat{\mathbf{u}}) &\propto \mathbf{h}(\hat{\mathbf{u}}), \\
\partial_{u_k} \mathbf{g}(\rho, \hat{\mathbf{u}}) &\propto \partial_{k} \mathbf{h}(\hat{\mathbf{u}}).
\end{align}

The denoising operator is then defined as the orthogonal projector $\mathcal{D} = \bar{J} (\bar{J}^\top \bar{J})^{-1} \bar{J}^\top$ onto the column space of the scaled Jacobian $\bar{J} = [\mathbf{h}(\hat{\mathbf{u}}), \partial_{1} \mathbf{h}(\hat{\mathbf{u}}), \dots, \partial_{d} \mathbf{h}(\hat{\mathbf{u}})]$, which constitutes the local tangent space at $\hat{\mathbf{u}}$.

We now address the geometric validity of this projector. Specifically, we justify why the operator projects onto the linear subspace $T_{(\rho,\hat{\mathbf{u}})}\mathcal{M}$ rather than the affine tangent space $\mathbf{g}(\rho,\hat{\mathbf{u}}) + T_{(\rho,\hat{\mathbf{u}})}\mathcal{M}$, which constitutes the flat plane tangent to the smooth manifold. Geometrically, a local linearization of a manifold corresponds to an affine tangent space. However, the affine offset becomes redundant because the position vector $\mathbf{g}$ itself is collinear with the tangent vector along the density coordinate: $\mathbf{g}(\rho, \hat{\mathbf{u}}) \propto \mathbf{h}(\hat{\mathbf{u}}) \propto \partial_{\rho} \mathbf{g}(\rho, \hat{\mathbf{u}})$. Consequently, the affine tangent space is identical to the linear tangent space. In summary, $\mathcal{D}$ represents a projection onto the local linear expansion of the manifold, depending solely on the reference velocity while remaining invariant across all density scales. This geometric interpretation is illustrated in Fig.~\ref{fig:manifold_projection}. The vector $\mathcal{D}\sqrt{\mathbf{f}}$ is the unique point within the tangent subspace defined by $\hat{\mathbf{u}}$ that minimizes the distance to the input $\sqrt{\mathbf{f}}$, thereby filtering out noise components orthogonal to the linearized manifold.

\begin{figure}[!ht]
    \centering
    \includegraphics[width=1.\linewidth]{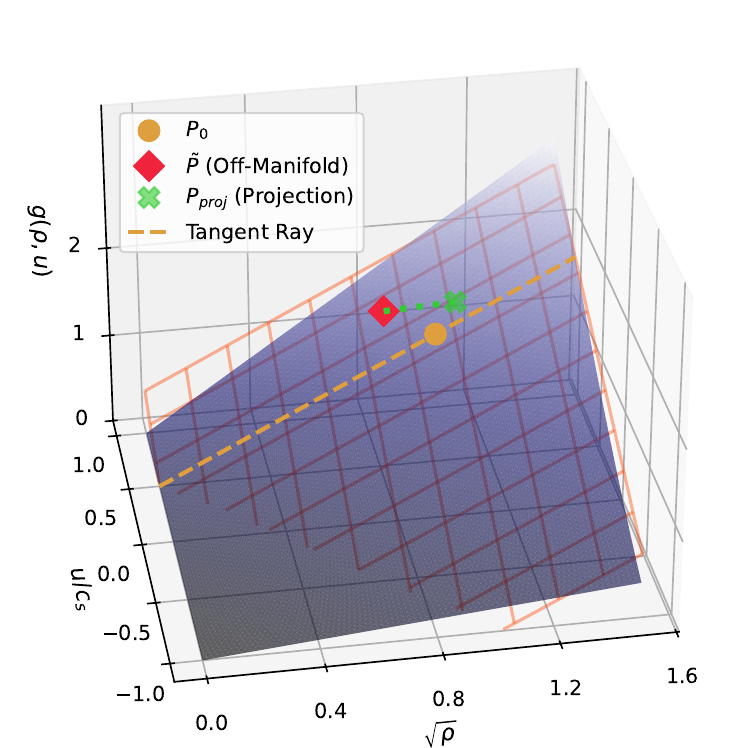}
    \caption{Projection onto the local tangent plane (orange grid) of a smooth surface (blue) in a neighborhood of the reference point $P_0$. Because of the conical structure in $\sqrt{\rho}$, the tangent plane touches the surface along a ray corresponding to a fixed value of $u$. Hence, the scaling factor $\sqrt{\rho}$ plays no role in constructing the tangent plane and its associated projection.}
    \label{fig:manifold_projection}
\end{figure}

In an LBM simulation, for each collision step, we construct a denoising operator from a selected reference velocity $\hat{\mathbf{u}}(t)$. It is also possible to specify a position-varying reference velocity $\hat{\mathbf{u}}(\mathbf{x},t)$. In that case, the cost of implementing it using elementary quantum gates depends on how the reference velocity varies across lattice sites. Since $\mathcal{D}$ is not a unitary, one has to use a block encoding of $\mathcal{D}$ to apply the denoising operator to the quantum state. We will discuss this problem later in the gate-level description of the collision step.

We can simplify the calculation with the factorization $\mathbf{h}(\mathbf{u}) = B \, \boldsymbol{\gamma}(\mathbf{u})$, where $B = [\mathbf{v}_0, \{\mathbf{v}_{k}\}_{k=1}^d, \{\mathbf{v}_{kl}\}_{1\leq k\leq l \leq d}] \in \mathbb{R}^{q \times d_{\mathcal{V}}}$ contains the basis vectors and $\boldsymbol{\gamma}(\mathbf{u}) \in \mathbb{R}^{d_{\mathcal{V}}}$ is their coefficients as given in Eq.~\eqref{eq:velocity_component}. The Jacobian matrix $\bar{J}$ can be written as $\bar{J} = B \,\Gamma(\hat{\mathbf{u}})$, where $\Gamma(\hat{\mathbf{u}}) = [\boldsymbol{\gamma}, \partial_{1} \boldsymbol{\gamma}, \dots, \partial_{d} \boldsymbol{\gamma}]_{\mathbf{u} = \hat{\mathbf{u}}} \in \mathbb{R}^{d_{\mathcal{V}} \times (d+1)}$. Since $B^\top B = I_{d_{\mathcal{V}}}$, the orthogonal projector for the specified reference velocity becomes
\begin{equation}
    \mathcal{D}(\hat{\mathbf{u}}) = B \, \Gamma \, (\Gamma^\top \Gamma)^{-1} \, \Gamma \, B^\top, \quad \Gamma = \Gamma(\hat{\mathbf{u}})
\label{eq:projection_hydro}
\end{equation}

\paragraph{Adaptation to the advection-diffusion problem}
The proposed framework readily extends to the LBM for the advection-diffusion equation (ADE) for a passive scalar field $C(\mathbf{x}, t)$ under an incompressible flow with advection velocity $\mathbf{u}_{\text{adv}}$,
\begin{equation}
\label{eq:advection_diffusion}
\partial_t C + \nabla \cdot (C \mathbf{u}_{\text{adv}}) = \kappa \nabla^2 C,
\end{equation}
where $\kappa$ is the diffusion constant. In the LBM formulation, $\kappa$ is related to the scalar relaxation time $\tau_g$ via $\kappa = c_s^2 (\tau_g - 1/2)$. Unlike the hydrodynamic case, only the zeroth moment $C(\mathbf{x}) = \sum_{i} f_i(\mathbf{x})$ is conserved during collision. The macroscopic velocity is not recovered from the distribution moments but is strictly imposed by the advection velocity field, $\mathbf{u} \equiv \mathbf{u}_{\text{adv}}$. Regarding the equilibrium distribution, while a first-order expansion is often utilized for ADE for its simplicity, we employ the full quadratic equilibrium expansion typically used for hydrodynamics. As noted in \cite{kruger_lattice_2017}, truncating the equilibrium at the first order introduces a velocity-dependent effective diffusion term. Retaining the quadratic terms ensures that the physical diffusion remains independent of $\mathbf{u}_{\text{adv}}$.

The denoising operator requires modifications for simulating advection-diffusion problems. Since we imposed the macroscopic velocity through $\mathbf{u}_{\text{adv}}$, the scaled Jacobian $\bar{J}$ and the denoising operator $\mathcal{D}$ no longer depend on an arbitrary reference. Because the advection velocity is assumed to remain invariant during collision, partial derivatives with respect to velocity are irrelevant. The scaled Jacobian and the denoising operator are parameterized solely by the advection velocity
\begin{equation}
\bar{J}_{\mathrm{ADE}} = \mathbf{h}(\mathbf{u}_{\text{adv}}), \quad \mathcal{D}_{\mathrm{ADE}} = \frac{\mathbf{h}(\mathbf{u}_{\text{adv}}) \, \mathbf{h}(\mathbf{u}_{\text{adv}})^\top}{\|\mathbf{h}(\mathbf{u}_{\text{adv}})\|^2}
\label{eq:projection_ADE}
\end{equation}

\paragraph{Equivariance}
This part demonstrates that the denoising operator satisfies inherent symmetry constraints of a collision process in LBM. Let $\Omega$ be a general collision operator such that $\Omega[\mathbf{f}] \in \mathbb{R}^q$ denotes the post-collision population vector. First, the collision operator satisfies a first-order homogeneity property called \emph{scale equivariance}: $\Omega[\lambda \mathbf{f}] = \lambda\,\Omega[\mathbf{f}]$ for all $\lambda \ge 0$. Second, it must be compatible with rotations and reflections of the lattice, a property called \emph{equivariance to lattice symmetries} \cite{corbetta_toward_2023}. Most lattice configurations $\{(\mathbf{c}_i, w_i)\}_{i=1}^q$ possess a symmetry group that preserves the discrete velocity set. For example, the D2Q9 model admits the dihedral symmetry group of the square, $D_8 = \langle r,s \mid r^4 = s^2 = e,\; srs = r^{-1}\rangle$, generated by a $90^\circ$-rotation $r$ and a reflection $s$ over an axis. Writing $g \cdot \mathbf{f}$ for the permutation of the components of $\mathbf{f} = (f_1,\dots,f_q)$ induced by a symmetry $g \in D_8$, an equivariant relation is given by
\begin{equation}
\label{eq:collision_symm_equiv}
\Omega[g \cdot \mathbf{f}] = g \cdot \Omega[\mathbf{f}].
\end{equation}
This identity ensures that pre- and post-collision populations transform consistently under symmetry actions.

The equivariant relations extend to the denoising operator. The scale equivariance follows directly from the operator's linearity, while Property~\ref{prop:equivariance} establishes equivariance under lattice symmetries. Thus, the denoising operator satisfies all symmetry constraints of a collision operator.

\begin{property}[Symmetry equivariance]
\label{prop:equivariance}
Let $G$ be the symmetry group of the lattice. Suppose a symmetry $g \in G$ acts on population vectors through the permutation matrix $P_g \in \mathbb{R}^{q\times q}$ and on physical velocities through the orthogonal map $R_g \in O(d)$. The denoising operator satisfies the equivariant relation
\begin{equation}
\mathcal{D}(R_g \hat{\mathbf{u}}) P_g = P_g  \mathcal{D}(\hat{\mathbf{u}})
\end{equation}
\end{property}
\begin{proof}
Appendix~\ref{appdx:equivariance}.
\end{proof}

\paragraph{Error bounds}
The following results establish error bounds for the denoising operator. Theorem \ref{thm:collision_error} estimates the error incurred when using $\mathcal{D}$ in the collision step. This collision error scales with $\|\Delta \mathbf{u}\|^2$, the squared distance between the reference velocity and the actual velocity, and also depends on the rate of change of the velocity across the domain.

\begin{theorem}[Collision error]
\label{thm:collision_error}
Let $\tilde{\mathbf{g}} \in \mathbb{R}^q$ be the post-streaming amplitude vector with macroscopic quantities $(\rho^{\text{str}}, \mathbf{u}^{\text{str}})$. Denote $\mathcal{E} = \| \mathcal{D} \, \tilde{\mathbf{g}} - \mathbf{g}(\rho^{\text{str}}, \mathbf{u}^{\text{str}}) \|$ as the collision error of $\mathcal{D}$. Then,
\begin{equation}
    \mathcal{E} \leq \frac{\sqrt{\rho^{\text{str}}}}{2} \left( \|\mathbf{H}\|_2 \|\Delta \mathbf{u}\|^2 + \sqrt{2} \|S(\mathbf{u}^{\text{str}})\|_{\mathrm{F}} \right),
\end{equation}
where $\Delta \mathbf{u} = \mathbf{u}^{\text{str}} - \hat{\mathbf{u}}$ is the velocity misalignment, $\mathbf{H} = \nabla^2\mathbf{h}$ the constant Hessian tensor of the quadratic function $\mathbf{h}(\mathbf{u})$, and $S(\mathbf{u}^{\text{str}}) = \frac{1}{2} \left(\nabla \mathbf{u}^{\text{str}} + (\nabla \mathbf{u}^{\text{str}})^\top \right)$ the \textit{strain-rate tensor}.
\end{theorem}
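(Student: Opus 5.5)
The proof splits the collision error with the triangle inequality around the equilibrium point at the post-streaming macroscopic state:
\begin{equation*}
\mathcal{E} \le \|\mathcal{D}\,\mathbf{g}(\rho^{\text{str}},\mathbf{u}^{\text{str}}) - \mathbf{g}(\rho^{\text{str}},\mathbf{u}^{\text{str}})\| + \|\mathcal{D}(\tilde{\mathbf{g}} - \mathbf{g}(\rho^{\text{str}},\mathbf{u}^{\text{str}}))\|.
\end{equation*}
The first term is a \emph{linearization error}. It measures how far the true equilibrium lies from the tangent space at $\hat{\mathbf{u}}$, and it should produce the $\|\mathbf{H}\|_2\|\Delta\mathbf{u}\|^2$ contribution. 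The second term is the \emph{non-equilibrium error} left after projection. It should produce the strain-rate contribution.

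For the linearization term, I use that $\mathcal{D}$ is an orthogonal projector. So $\|(I-\mathcal{D})\mathbf{g}\| \le \|\mathbf{g} - \mathbf{w}\|$ for any $\mathbf{w}$ in the column space of $\bar{J}$. Take
\begin{equation*}
\mathbf{w} = \sqrt{\rho^{\text{str}}}\bigl(\mathbf{h}(\hat{\mathbf{u}}) + \nabla\mathbf{h}(\hat{\mathbf{u}})\,\Delta\mathbf{u}\bigr),
\end{equation*}
which is a combination of the columns $\mathbf{h}(\hat{\mathbf{u}})$ and $\partial_k\mathbf{h}(\hat{\mathbf{u}})$. Since $\mathbf{h}$ is exactly quadratic in $\mathbf{u}$, its Taylor expansion terminates:
\begin{equation*}
\mathbf{h}(\mathbf{u}^{\text{str}}) - \mathbf{h}(\hat{\mathbf{u}}) - \nabla\mathbf{h}(\hat{\mathbf{u}})\,\Delta\mathbf{u} = \tfrac12 \mathbf{H}[\Delta\mathbf{u},\Delta\mathbf{u}].
\end{equation*}
Bounding the norm of this bilinear form by $\|\mathbf{H}\|_2\|\Delta\mathbf{u}\|^2$ gives the first term, $\tfrac{\sqrt{\rho^{\text{str}}}}{2}\|\mathbf{H}\|_2\|\Delta\mathbf{u}\|^2$.

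For the non-equilibrium term, I use $\|\mathcal{D}\|\le 1$ to bound it by $\|\tilde{\mathbf{g}} - \mathbf{g}(\rho^{\text{str}},\mathbf{u}^{\text{str}})\|$. This is the deviation of the post-streaming amplitudes from equilibrium at the same macroscopic state. I model the pre-streaming state as an exact equilibrium, since the previous collision was a full relaxation with $\tau=1$. Then I Chapman--Enskog-expand the streamed populations, $f_i(\mathbf{x}) = f_i^{\text{eq}}(\mathbf{x}-\mathbf{c}_i)$, to first order in gradients. The density and momentum gradient pieces are absorbed into the shift to $(\rho^{\text{str}},\mathbf{u}^{\text{str}})$, because those quantities are defined by the streamed moments. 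What remains is the standard first-order non-equilibrium part, $f_i^{(1)} \approx -\tfrac{\rho w_i}{c_s^2}\, Q_{i,kl} S_{kl}$ with $Q_{i,kl} = c_{ik}c_{il} - c_s^2\delta_{kl}$. At leading order in $\mathbf{u}$ this lies in the span of the $\mathbf{v}_{kl}$.

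Next I pass to amplitudes via $\sqrt{f^{\text{eq}}+f^{(1)}} - \sqrt{f^{\text{eq}}} \approx f^{(1)}/(2\sqrt{f^{\text{eq}}})$ with $\sqrt{f^{\text{eq}}}\approx\sqrt{\rho w_i}$. This gives a vector $-\tfrac{\sqrt{\rho}}{2}\sum_{k\le l}\sqrt{1+\delta_{kl}}\,S_{kl}\,\mathbf{v}_{kl}$. Orthonormality of the $\mathbf{v}_{kl}$ gives its norm as $\tfrac{\sqrt\rho}{2}\bigl(\sum_k 2S_{kk}^2 + \sum_{k<l}S_{kl}^2\bigr)^{1/2}$. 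This is bounded by $\tfrac{\sqrt\rho}{2}\sqrt2\,\|S\|_{\mathrm F}$, using $\sum_{k<l}S_{kl}^2 \le 2\sum_{k<l}S_{kl}^2$ and the symmetry of $S$. Adding the two pieces yields the claim.

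The main obstacle is this last step. One must make the Chapman--Enskog identification rigorous enough in lattice units to show that only the symmetric gradient survives at leading order, and track the exact constants, especially the factor $\sqrt{1+\delta_{kl}}$ in the normalization. One must also decide how to treat higher-order remainders: either drop them as $\mathcal{O}(\|\mathbf{u}\|\,\|\nabla\mathbf{u}\|,\|\nabla^2\|)$, consistent with the paper's asymptotic regime, or state them explicitly. I will first try the Taylor expansion in $\mathbf{c}_i$ together with the isotropy identities in the footnote to cancel the trace and antisymmetric parts.
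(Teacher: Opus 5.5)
Your argument follows the paper's proof step for step. Your $\|(I-\mathcal{D})\mathbf{g}\|+\|\mathcal{D}\boldsymbol{\xi}\|$ is exactly the paper's $\|\Pi_{\hat{V}^\perp}\mathbf{g}\|+\|\Pi_{\hat{V}}\boldsymbol{\xi}\|$. The first term is handled by the same terminating Taylor expansion about $\hat{\mathbf{u}}$. The second is handled by $\|\Pi_{\hat{V}}\|\le 1$, the Chapman--Enskog form of $f^{\text{neq}}$, and $\xi_i\approx f_i^{\text{neq}}/(2\sqrt{f_i^{\text{eq}}})$.

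There are two minor differences. The paper simply quotes $f_i^{\text{neq}}\approx-\rho w_i\tau c_s^{-2}\sum_{k,l}Q_{ikl}\partial_k u_l$ rather than rederiving it from streamed equilibria; your derivation sits at the same heuristic level. The paper also computes $\|\boldsymbol{\xi}\|^2$ directly from the fourth-moment identity $\sum_i w_iQ_{ikl}Q_{irs}=c_s^4(\delta_{kr}\delta_{ls}+\delta_{ks}\delta_{lr})$, whereas you expand in the orthonormal $\mathbf{v}_{kl}$, which is equivalent. (The paper's display of that identity has a stray factor $2$, but its next line uses the value above.)

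There is, however, a slip in exactly the step you flagged as delicate. Since $Q_{ikl}$ and $S_{kl}$ are both symmetric, each off-diagonal pair appears twice: $\sum_{k,l}Q_{ikl}S_{kl}=\sum_k Q_{ikk}S_{kk}+2\sum_{k<l}Q_{ikl}S_{kl}$. With $\sqrt{w_i}\,Q_{ikl}/c_s^2=\sqrt{1+\delta_{kl}}\,(\mathbf{v}_{kl})_i$ this gives
\[
\boldsymbol{\xi}\approx-\frac{\sqrt{\rho}}{2}\sum_{k\le l}\frac{2}{\sqrt{1+\delta_{kl}}}\,S_{kl}\,\mathbf{v}_{kl}.
\]
So the coefficient is $\sqrt{2}\,S_{kk}$ on the diagonal, as you have, but $2S_{kl}$, not $S_{kl}$, off the diagonal. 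The norm is then
\[
\|\boldsymbol{\xi}\|\approx\frac{\sqrt{\rho}}{2}\Bigl(2\sum_k S_{kk}^2+4\sum_{k<l}S_{kl}^2\Bigr)^{1/2}=\frac{\sqrt{\rho}}{2}\sqrt{2}\,\|S\|_{\mathrm{F}},
\]
which is an equality and matches the paper's $\|\boldsymbol{\xi}\|^2\approx\frac{\rho}{2}\|S\|_{\mathrm{F}}^2$.

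Your final bound still comes out right, but only because the dropped factor made your intermediate norm too small. At leading order the estimate of $\|\boldsymbol{\xi}\|$ is attained, so there is no slack. Your lossy inequality on $\sum_{k<l}S_{kl}^2$ was hiding the error rather than doing any work.

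Your basis expansion does show one thing the paper's moment computation does not. The leading-order $\boldsymbol{\xi}$ lies in $\mathrm{span}\{\mathbf{v}_{kl}\}$. This span is orthogonal to $\hat{V}=\mathrm{span}\{\mathbf{v}_0,\mathbf{v}_1,\dots,\mathbf{v}_d\}$ when $\hat{\mathbf{u}}=\mathbf{0}$. In that case $\Pi_{\hat{V}}\boldsymbol{\xi}$ vanishes at leading order, and the step $\|\Pi_{\hat{V}}\boldsymbol{\xi}\|\le\|\boldsymbol{\xi}\|$ used in both proofs is far from tight.
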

\begin{proof}
    See Appendix \ref{appdx:error_analysis} for the proof.
\end{proof}

Theorem \ref{thm:denoising-error}, on the other hand, estimates the Euclidean distance between the projection and the equilibrium manifold. Briefly, this out-of-equilibrium error scales quadratically with the distance between the initial state and the reference state on the manifold.

\begin{theorem}[Denoising error]
\label{thm:denoising-error}
Let $\tilde{\mathbf{g}} \in \mathbb{R}^q$ be an arbitrary vector. Define $\mathcal{E}^\perp = \min_{\rho, \mathbf{u}} \| \mathcal{D} \, \tilde{\mathbf{g}} - \mathbf{g}(\rho, \mathbf{u})\|$ as the distance from the projection to the equilibrium manifold. Given a reference density  $\hat\rho$, this distance is bounded, under conditions for Newton-Kantorovich convergence, by:
\begin{equation}
    \mathcal{E}^\perp \leq 2 \beta^2 \Delta^2 \|\mathbf{H}\|_2 \sqrt{\hat \rho + 2\beta \Delta},
\end{equation}
where $\Delta =  \|\sqrt{\hat \rho} \mathbf{h}(\hat{\mathbf{u}}) - \tilde{\mathbf{g}}\|$ and $\beta =  1 / \sigma_{\text{min}} (\nabla \mathbf{g}(\hat\rho, \hat{\mathbf{u}}))$ is the inverse of the smallest singular value of the full-column-rank Jacobian. Furthermore, choosing $\sqrt{\hat{\rho}} = \mathbf{h}(\hat{\mathbf{u}})^\top \tilde{\mathbf{g}} / \|\mathbf{h}(\hat{\mathbf{u}})\|^2$ minimizes $\Delta$, yielding
\begin{equation}
    \Delta = \|\tilde{\mathbf{g}}\| \left|\sin\theta\right|, \quad \cos\theta = \frac{\mathbf{h}(\hat{\mathbf{u}})^\top \tilde{\mathbf{g}}}{\|\mathbf{h}(\hat{\mathbf{u}})\| \|\tilde{\mathbf{g}}\|}
\end{equation}
This gives a sharper bound for $\mathcal{E}^\perp$ in terms of the misalignment angle $\theta$.
\end{theorem}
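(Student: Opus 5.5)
The plan is to view $\mathcal{D}\tilde{\mathbf{g}}$ as a point of the tangent space $T := \mathrm{col}(\bar J)$ at $(\hat\rho,\hat{\mathbf{u}})$, and to bound its distance to $\mathcal{M}$ by finding a point $\mathbf{g}(\rho^\star,\mathbf{u}^\star)$ close to it via a Newton-type correction. Set $\mathbf{p} = \mathcal{D}\tilde{\mathbf{g}}$ and $\hat{\mathbf{g}} = \mathbf{g}(\hat\rho,\hat{\mathbf{u}}) = \sqrt{\hat\rho}\,\mathbf{h}(\hat{\mathbf{u}})$. Because $\hat{\mathbf{g}} \in T$ and $\mathcal{D}$ is an orthogonal projector, $\|\mathbf{p}-\hat{\mathbf{g}}\| = \|\mathcal{D}(\tilde{\mathbf{g}}-\hat{\mathbf{g}})\| \le \Delta$. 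Since $T$ is the range of $J := \nabla\mathbf{g}(\hat\rho,\hat{\mathbf{u}})$ (the columns of $J$ are positive rescalings of those of $\bar J$), there is a unique parameter increment $\boldsymbol{\xi} = (\delta\rho,\delta\mathbf{u})$ with $\mathbf{p} = \hat{\mathbf{g}} + J\boldsymbol{\xi}$. It satisfies $\|\boldsymbol{\xi}\| \le \beta\|\mathbf{p}-\hat{\mathbf{g}}\| \le \beta\Delta$.

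The key step is to compare $\mathbf{p}$ with the actual manifold point at the perturbed parameters, $\mathbf{g}(\hat\rho+\delta\rho,\hat{\mathbf{u}}+\delta\mathbf{u})$. By Taylor's theorem, $\mathcal{E}^\perp \le \|\mathbf{g}(\hat\rho+\delta\rho,\hat{\mathbf{u}}+\delta\mathbf{u}) - \hat{\mathbf{g}} - J\boldsymbol{\xi}\| \le \tfrac12 \sup\|\nabla^2\mathbf{g}\|\,\|\boldsymbol{\xi}\|^2$, with the supremum taken over the segment. I would not use this crude estimate directly, because $\partial_\rho^2\sqrt{\rho}$ blows up as $\rho\to 0$. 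This is where the Newton--Kantorovich hypotheses enter. I would run the Gauss--Newton iteration $\boldsymbol{\xi}_{n+1} = \boldsymbol{\xi}_n - J^+(\mathbf{g}(\cdot+\boldsymbol{\xi}_n)-\mathbf{p})$ started at $(\hat\rho,\hat{\mathbf{u}})$. Under the Kantorovich conditions, the iterates stay in a ball of radius $2\beta\Delta$ around $(\hat\rho,\hat{\mathbf{u}})$, so that $\rho \le \hat\rho + 2\beta\Delta$ throughout. The residual is then bounded by the second-order remainder.

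To make the remainder explicit, I would write $\mathbf{g} = \sqrt{\rho}\,\mathbf{h}(\mathbf{u})$ and note that $\mathbf{h}$ is quadratic with constant Hessian $\mathbf{H}$. The velocity–velocity block is $\sqrt{\rho}\,\mathbf{H}$, which has norm at most $\sqrt{\hat\rho+2\beta\Delta}\,\|\mathbf{H}\|_2$ on the ball. The density terms are absorbed using the conical structure: the exact identity $\mathbf{g}(c\rho,\mathbf{u}) = \sqrt{c}\,\mathbf{g}(\rho,\mathbf{u})$ means that moving along the ray stays on $\mathcal{M}$. So we may reparametrize by amplitude $s = \sqrt{\rho}$, in which $\mathbf{g} = s\,\mathbf{h}(\mathbf{u})$ is bilinear-quadratic. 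The $s$–$s$ second derivative vanishes, and the mixed term $\partial_s\partial_{\mathbf{u}}$ can be controlled at first order alongside the velocity step. Collecting terms with $\|\boldsymbol{\xi}\|\le 2\beta\Delta$ gives a bound of the form $\tfrac12\cdot 4\beta^2\Delta^2\|\mathbf{H}\|_2\sqrt{\hat\rho+2\beta\Delta} = 2\beta^2\Delta^2\|\mathbf{H}\|_2\sqrt{\hat\rho+2\beta\Delta}$. The constants $2$ and $2\beta\Delta$ come from the Kantorovich radius.

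The main obstacle is this bookkeeping of the mixed density–velocity terms, so that only $\|\mathbf{H}\|_2$ appears. My first attempt would be the $s=\sqrt\rho$ reparametrization, bounding the mixed term by $\|\nabla\mathbf{h}\|$ and absorbing it into the Newton step, since the ray direction is exact. The final claim is elementary. $\Delta(\hat\rho) = \|\sqrt{\hat\rho}\,\mathbf{h} - \tilde{\mathbf{g}}\|$ is a least-squares problem in the scalar $\sqrt{\hat\rho}$, minimized at $\sqrt{\hat\rho} = \mathbf{h}^\top\tilde{\mathbf{g}}/\|\mathbf{h}\|^2$. At the minimizer, the residual is the component of $\tilde{\mathbf{g}}$ orthogonal to $\mathbf{h}$, whose norm is $\|\tilde{\mathbf{g}}\||\sin\theta|$. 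Substituting this value of $\Delta$ gives the sharper bound.
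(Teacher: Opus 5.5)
Your scaffolding matches the paper's: a Newton--Kantorovich root started at $(\hat\rho,\hat{\mathbf{u}})$, a first step of size $r\le\beta\Delta$, and the root within $2r$ of the start, so that $\rho\le\hat\rho+2\beta\Delta$ and $\|\Delta\mathbf{u}\|\le 2\beta\Delta$. Your optimal choice of $\sqrt{\hat\rho}$ is also correct. The gap is at the step you flag as the main obstacle, and the fix you propose does not work. Suppose you bound the residual by the full second-order Taylor remainder of $\mathbf{g}$. Even in the chart $s=\sqrt{\rho}$, that remainder is exactly $\delta s\,\nabla\mathbf{h}(\hat{\mathbf{u}})\,\delta\mathbf{u}+\tfrac{s}{2}\,\delta\mathbf{u}^\top\mathbf{H}\,\delta\mathbf{u}$. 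The mixed term is quadratic in $\beta\Delta$ and generally nonzero. Bounding it by $\|\nabla\mathbf{h}\|$ therefore leaves an extra term $\|\nabla\mathbf{h}\|\,|\delta s|\,\|\delta\mathbf{u}\|$ that does not appear in the claimed bound. Switching to the $s$ chart also changes the Jacobian, hence $\beta$ and the Kantorovich radius. The constants $2\beta\Delta$ and $\sqrt{\hat\rho+2\beta\Delta}$, which are measured in $(\rho,\mathbf{u})$, would then not come out as stated.

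The missing idea is the tangential/normal splitting. Your $T$ is the paper's $\hat V$. A fixed point of your iteration satisfies $J^+(\mathbf{g}(\rho^\star,\mathbf{u}^\star)-\mathbf{p})=0$, equivalently $J^\top(\mathbf{g}(\rho^\star,\mathbf{u}^\star)-\mathbf{p})=0$, equivalently $\Pi_T\,\mathbf{g}(\rho^\star,\mathbf{u}^\star)=\mathbf{p}$. This is exactly the square system $\Pi_{\hat V}\sqrt{\rho}\,\mathbf{h}(\mathbf{u})=\Pi_{\hat V}\tilde{\mathbf{g}}$ that the paper solves by Newton--Kantorovich. So the tangential part of the residual vanishes identically, and $\mathcal{E}^\perp\le\|\Pi_{T^\perp}\mathbf{g}(\rho^\star,\mathbf{u}^\star)\|$. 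Now expand exactly: $\mathbf{h}(\mathbf{u}^\star)=\mathbf{h}(\hat{\mathbf{u}})+\nabla\mathbf{h}(\hat{\mathbf{u}})\Delta\mathbf{u}^\star+\tfrac12(\Delta\mathbf{u}^\star)^\top\mathbf{H}\Delta\mathbf{u}^\star$. Since $T$ is a linear subspace containing $\mathbf{h}(\hat{\mathbf{u}})$ and every $\partial_k\mathbf{h}(\hat{\mathbf{u}})$, the first two terms remain in $T$ after multiplication by any $\sqrt{\rho^\star}$. Hence $\Pi_{T^\perp}\mathbf{g}(\rho^\star,\mathbf{u}^\star)=\tfrac{\sqrt{\rho^\star}}{2}\Pi_{T^\perp}\big((\Delta\mathbf{u}^\star)^\top\mathbf{H}\Delta\mathbf{u}^\star\big)$ exactly. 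Density derivatives and mixed terms never enter the bound. They appear only in the hypothesis $2\beta\mu r\le 1$, through the Lipschitz constant $\mu$ of $D_{\mathcal{F}}$. Combining this with $\|\Delta\mathbf{u}^\star\|\le 2r$, $\rho^\star\le\hat\rho+2r$ and $r\le\beta\Delta$ gives $2\beta^2\Delta^2\|\mathbf{H}\|_2\sqrt{\hat\rho+2\beta\Delta}$ directly, with no reparametrization.
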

\begin{proof}
    See Appendix \ref{appdx:error_analysis} for the proof, including the necessary conditions.
\end{proof}

\paragraph{Quantum gate decomposition}
This part explains how to implement the collision step with the denoising operator as a quantum circuit. The key challenge is the fact that the denoising operator is a non-unitary that acts on elements in $\mathbb{R}^q$. We will explain how to embed it in a unitary operator acting on $\mathcal{H}_V \otimes \mathbb{C}^2$, the composite Hilbert space of $V$ and an ancilla. A detailed gate-level implementation of the collision step will also be provided.

First, we encode the denoising operator in a unitary matrix that acts on $\mathbb{R}^q \otimes \mathbb{R}^2 \cong \mathbb{R}^{2q}$ using the block-encoding method introduced by \textcite{schlimgen_quantum_2022}. The method applies for any square matrix $\mathcal{A} \in \mathbb{R}^{q \times q}$ such that $\|\mathcal{A}\| \leq 1$ (otherwise, normalize $\mathcal{A}$ by $\alpha \geq \|\mathcal{A}\|$). Let it have the singular value decomposition $\mathcal{A} = U \Sigma V^\dagger$, where $U,V^\dagger \in \mathbb{R}^{q \times q}$ are unitary matrices and $\Sigma = \mathrm{diag}(\sigma_{1}, \dots, \sigma_q)$ contains the singular values of $\mathcal{A}$. Define two diagonal matrices $\Sigma_{+}$ and $\Sigma_{-}$ by
\begin{equation}
    \Sigma_{\pm} = \mathrm{diag}(\sigma_{1} \pm i\sqrt{1 - \sigma_1^2}, \dots, \sigma_{q} \pm i\sqrt{1 - \sigma_q^2})
\end{equation}
As both $\Sigma_{+}$ and $\Sigma_{-}$ are unitary, the diagonal concatenation $U_{\Sigma} = \mathrm{diag}(\Sigma_{+}, \Sigma_{-})$ is also unitary. Define the block-encoding of $\mathcal{A}$ as 
\begin{equation}
    U_{\mathcal{A}} = (U \otimes H) U_{\Sigma} (V^\dagger \otimes H),
\end{equation}
where $H$ is the Hadamard operator. One can verify that the top-left block of the unitary is $(I \otimes \bra{0}) U_{\mathcal{A}} (I \otimes \ket{0}) = \frac{1}{2} U (\Sigma_+ + \Sigma_-) V^\dagger =\mathcal{A}$. Specific to our case, when $\mathcal{A} = \mathcal{D}$ is a projector, the singular values are either $0$ or $1$. As a result, the diagonal entries of $U_{\Sigma}$ can only be $1$, $i$, or $-i$.

\begin{figure}[!ht]
    \centering
    \begin{subfigure}[b]{1.0\linewidth}
        \centering
        \includegraphics[width=\linewidth]{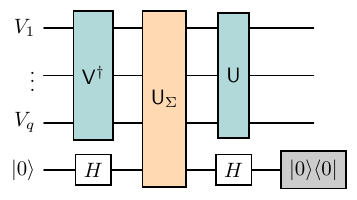}
        \caption{}
        \label{fig:collision_qc}
    \end{subfigure}
    
    \vspace{0.5cm} 
    
    \begin{subfigure}[b]{1.0\linewidth}
        \centering
        \includegraphics[width=\linewidth]{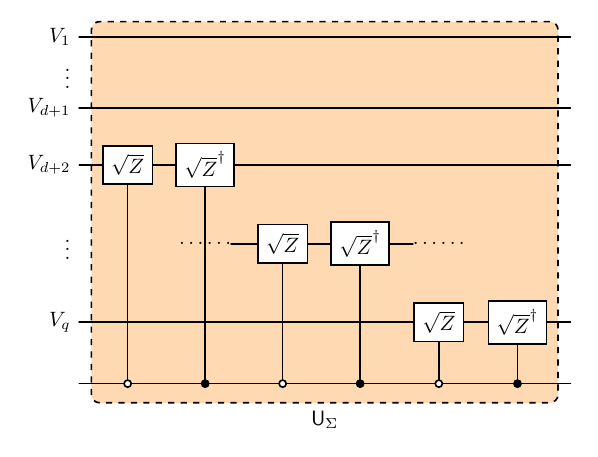}
        \caption{}
        \label{fig:collision_phase}
    \end{subfigure}
    \caption{\textbf{(a)} Quantum implementation of the collision step, including the collision unitary operator $\mathsf{U}_{\mathcal{D}}$ followed by a post-selection of the ancilla. Every collision step must be supplied with an ancilla in the state $\ket{0}$, which is detached upon successful post-selection in $\ket{0}$. In numerical simulations, the post-collision state in $\mathcal{H}_V$ must be renormalized to ensure density conservation. \textbf{(b)} Diagonal operator $\mathsf{U}_{\Sigma}$ for flow problems as a sequence of controlled phase gates, assuming non-trivial phases on the last $q-d-1$ qubits.}
    \label{fig:collision_step}
\end{figure}

We construct a unitary quantum operator on $\mathcal{H}_V \otimes \mathbb{C}^2$ as an extension of $U_{\mathcal{D}}$ by identifying the standard basis of $\mathbb{R}^q$ with the one-hot basis $\{\ket{\mathbf{e}_1}, \dots, \ket{\mathbf{e}_q}\}$ while keeping other basis states of $\mathcal{H}_V$ unaffected. This corresponds to the one-hot extension of unitary matrices $U$ and $V^\dagger$, denoted by upright notations $\mathsf{U}$ and $\mathsf{V}^\dagger$, respectively. Starting from the identity of $\mathcal{H}_V$, the extension is the replacement of entries specified by the rows and columns of the one-hot states $\{\ket{\mathbf{e}_1}, \dots, \ket{\mathbf{e}_q}\}$ with $U$ or $V^\dagger$. Similarly, let  $\mathsf{U}_{\Sigma}$ be the one-hot extension of $U_{\Sigma}$ on $\mathcal{H}_V \otimes \mathbb{C}^2$. It is also a diagonal operator and can be implemented by controlled phase gates, which only put the non-trivial phases $\pm i$ to some basis states. The number of non-trivial phases is $2(q - \mathrm{rank}(\mathcal{D}))$, which equals $2(q-d-1)$ for flow problems and $2(q-1)$ for advection-diffusion problems. The unitary operator for collision is implemented by
\begin{equation}
    \mathsf{U}_{\mathcal{D}} = (\mathsf{U} \otimes H) \mathsf{U}_{\Sigma} (\mathsf{V}^\dagger \otimes H)
\end{equation}
The collision step is achieved with an application of $\mathsf{U}_{\mathcal{D}}$ and an ancilla post-selection, as displayed in Fig.~\ref{fig:collision_step}. The figure also presents an implementation of the diagonal operator $\mathsf{U}_{\Sigma}$ using controlled phase gates. Meanwhile, $\mathsf{U}$ and $\mathsf{V}^\dagger$ can be implemented using standard gate-decomposition methods \cite{mottonen_quantum_2004, vartiainen_efficient_2004, rakyta_approaching_2022}. However, since $\mathsf{U}$ and $\mathsf{V}^\dagger$ only act non-identically on the one-hot basis states, it is more efficient and numerically accurate to implement them with \textit{Givens rotations}. Real-valued Givens rotations are particle-preserving unitaries acting on the two-dimensional subspace spanned by $\{\ket{01},\ket{10}\}$ and take the form \cite{arrazola_universal_2022}
\begin{equation}
    G(\theta) = 
    \begin{bmatrix}
    1 & 0 & 0 & 0 \\
    0 & \cos\theta & - \sin\theta & 0 \\
    0 & \sin\theta & \cos\theta & 0 \\
    0 & 0 & 0 & 1
    \end{bmatrix}
\end{equation}
\citeauthor{clements_optimal_2016}~\cite{clements_optimal_2016, cilluffo_commentary_2024} proposed an algorithm that decomposes a real-valued $N \times N$ unitary as $D \prod_{(m,n) \in \mathcal{N}} G_{mn}$, where $D$ is a diagonal phase matrix and the subscripts $m,n$ denote basis element indices for the 2D subspace following an ordered sequence $\mathcal{N}$. In fact, the algorithm produces a sequence of exchanges between consecutive basis elements, i.e. $n = m+1$, and $D$ can be fixed to the identity, absorbing all its phases $\pm 1$ to the Givens rotations. The decomposition achieves a minimal number of rotations, $N(N-1)/2$, arranged in a minimal depth by laying rotation operators in a brickwork structure. In our setting, we can apply this decomposition to the effective unitaries $U$ and $V^\dagger$ in $\mathbb{R}^q$ instead of their one-hot extensions. This results in $q(q-1)/2$ effective Givens rotations for each unitary acting on $\mathbb{R}^q$. The Givens rotations are implemented on quantum circuits by their one-hot extensions on the full Hilbert space $\mathcal{H}_{V}$. Fig.~\ref{fig:collision_subblock} shows the general layout of the Clements decomposition into Givens rotations.

\begin{figure}[!ht]
    \centering
    \includegraphics[width=\linewidth]{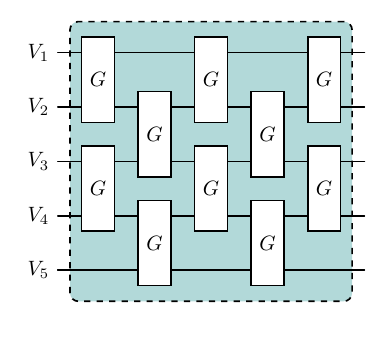}
    \caption{The Clements decomposition of $\mathsf{U}$ and $\mathsf{V}^\dagger$ yields a brickwork layout of Givens rotations, with each rotation having a distinct angle. This diagram is shown for $q = 5$.}
    \label{fig:collision_subblock}
\end{figure}

To sum up, the implementation of $\mathsf{U}_{\mathcal{D}}$ employs $2(q - \mathrm{rank}(\mathcal{D}))$ controlled phase gates and $q(q-1)$ Givens rotations. The ancilla must be post-selected in the state $\ket{0}$. This is done by measuring the ancilla and retaining the normalized resulting state only when the outcome is $0$. If the outcome is $1$, the experiment is discarded and the LBM simulation must be restarted. 

\subsection{Streaming}
The streaming step in Eq.~\eqref{eq:lbm_streaming} can be implemented by a conditional shift operator acting on the spatial register, $\mathcal{S}: \ket{\mathbf{x}} \ket{\mathbf{e}_i} \mapsto \ket{\mathbf{x} + \mathbf{c}_i} \ket{\mathbf{e}_i}$. There are numerous quantum arithmetic techniques that have been adopted for efficient implementations of this streaming operator \cite{todorova_quantum_2020, schalkers_efficient_2024}. In light of that development, we only describe the advantage of one-hot encoding in the construction of the streaming operator and provide a decomposition using the quantum addition $\ket{x} \mapsto \ket{x+1}$ as the unit resource, which is open to further gate-level optimizations.

A distinct benefit of using one-hot encoding for the local state (Eq.~\eqref{eq:local_state}) is the simplified quantum control logic. Since the velocity state $\ket{\mathbf{e}_i}$ is uniquely identified by the activation of the $i$-th qubit in the velocity register (denoted as $V_i$), the conditional shift does not require multiple control qubits. Instead, the arithmetic addition, denoted as $\mathsf{ADD}(\mathbf{c}_i)$, is controlled solely by the state of the single qubit $V_i$. Therefore, the streaming operator decomposes as
\begin{equation}
\mathcal{S} = \prod_{i=1}^q \left( \mathsf{ADD}(\mathbf{c}_i) \otimes \ket{1}\bra{1}_{V_i} + \mathsf{I}_D \otimes \ket{0}\bra{0}_{V_i} \right)
\label{eq:streaming-op},
\end{equation}
which ensures that when the velocity register is in state $\ket{\mathbf{e}_i}$, only the $i$-th term in the product applies a shift $\mathbf{c}_i$ to the domain register $D$. This formulation represents the streaming step in the general LBM simulation workflow (Fig.~\ref{fig:LBM_workflow}). We can further optimize the circuit depth using the structure of the discrete velocities. In common velocity sets, each velocity vector $\mathbf{c}_i$ can be decomposed into a sum of unit steps in Cartesian axes ($\pm \hat{x}, \pm \hat{y}$ and $\pm \hat{z}$). By factoring $\mathsf{ADD}(\mathbf{c}_i)$ into unit shifts, we can rearrange the streaming step to apply these shifts sequentially. Instead of controlling a shift $\mathbf{c}_i$ by a single qubit $V_i$, we group the velocity states based on their spatial components. Let $\hat\Gamma = \{ \pm \hat{x}, \pm \hat{y}, \dots \}$ be the set of $2d$ unit directions in a $d$-dimensional lattice. The streaming operator can be rewritten as a product of controlled unit shifts,
\begin{equation}\mathcal{S} = \prod_{\hat{\mathbf{c}} \in \hat{\Gamma}} \left( \mathsf{ADD}(\hat {\mathbf{c}}) \otimes \Pi_{\hat {\mathbf{c}}} + \mathsf{I}_D \otimes (\mathbb{I}_V - \Pi_{\hat {\mathbf{c}}}) \right), \label{eq:streaming-opt}
\end{equation}
where $\Pi_{\hat {\mathbf{c}}}$ is a projector onto the subspace of velocity states that include the component $\hat {\mathbf{c}}$. In the one-hot representation, this projector corresponds to the logical $\mathsf{OR}$ of the specific velocity qubits that share the direction $\hat {\mathbf{c}}$, i.e., $\Pi_{\hat {\mathbf{c}}} = \sum_{i : \mathbf{c}_i \cdot \hat {\mathbf{c}} > 0} \ket{1}\bra{1}_{V_i}$. Fig.~\ref{fig:streaming_qc} shows an implementation of this reformulated streaming operator using an ancilla to store the logical $\mathsf{OR}$ and serve as the control qubit for the unit shifts. We note that for a one-hot bitstring $b_1b_2\dots b_q \in \{0,1\}^q$, the control variable is $b_{\mathrm{anc}} = \bigvee_{i : \mathbf{c}_i \cdot \hat {\mathbf{c}} > 0}b_i = \bigoplus_{i : \mathbf{c}_i \cdot \hat {\mathbf{c}} > 0}b_i$, which can be computed by applying $\mathrm{CNOT}$ gates on every qubit $V_i$ with $\mathbf{c}_i \cdot \hat {\mathbf{c}} > 0$ and the ancilla.

Finally, we demonstrate that the square-root amplitude encoding $\sqrt{f_i}$ is necessary to simulate the correct macroscopic behavior during propagation. The action of $\mathcal{S}$ on the state yields:
\begin{align}\mathcal{S}\ket{\psi} &= \sum_{\mathbf{x}}\sum_i \sqrt{f_i(\mathbf{x} - \mathbf{c}_i)} \ket{\mathbf{x}} \ket{\mathbf{e}_i} \nonumber \\
&= \sum_{\mathbf{x}} \sqrt{\rho^{\text{str}}(\mathbf{x})} \ket{\mathbf{x}} \ket{\psi^{\text{str}}_{\mathbf{x}}}
\end{align}
The resulting macroscopic density is $\rho^{\text{str}}(\mathbf{x}) = \sum_i | \langle{\mathbf{x}, \mathbf{e}_i | \mathcal{S} | \psi \rangle} |^2 = \sum_i f_i(\mathbf{x}-\mathbf{c}_i)$, consistent with the transport dynamics of the standard LBM. However, using the first-order local encoding $|\tilde\psi_{\mathbf{x}}\rangle = [f_1, \dots, f_q]^\top / \big(\sum_i f_i^2\big)^{1/2}$ yields $\rho^{\text{str}}(\mathbf{x}) = \sum_i \rho(\mathbf{x}-\mathbf{c}_i) \frac{f^2_i(\mathbf{x}-\mathbf{c}_i)}{\|\mathbf{f}(\mathbf{x}-\mathbf{c}_i)\|^2}$ which does not reproduce the correct linear transport dynamics.

\begin{figure}[!ht]
    \centering
    \includegraphics[width=\linewidth]{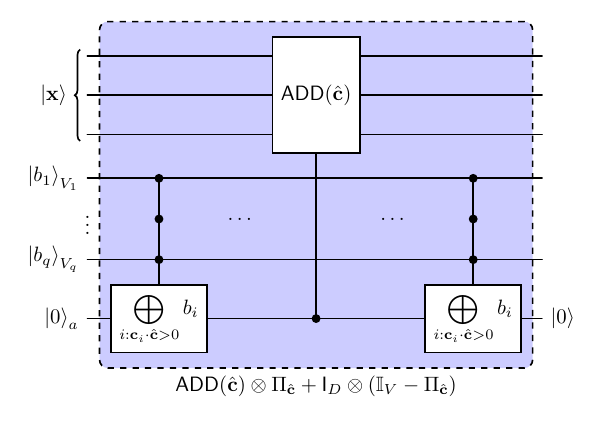}
    \caption{General quantum circuit for streaming along a Cartesian unit direction $\hat{\mathbf{c}}$. The circuit computes and uncomputes the binary value for control with an ancilla. The complete streaming process can be performed by concatenating $2d$ similar circuits.}
    \label{fig:streaming_qc}
\end{figure}

\subsection{Boundary conditions}
Quantum implementations of the LBM often accommodate two types of boundary conditions: periodic boundary conditions and bounce-back. We will briefly explain the corresponding physical behavior and implementations.

Periodic boundary conditions treat opposite edges as being glued together; a quantity leaving the domain at one side will appear at the opposite. This enforces a global density conservation and aligns naturally with reversible quantum evolutions. The realization of periodic boundary is subsumed in the streaming step thanks to quantum modular arithmetic operations.

Bounce-back is when fluid distributions are reflected at the reverse direction upon encountering walls or objects during streaming. This is known to approximate the no-slip boundary condition $v_{\text{wall}} = 0$. A bounce-back scheme is called \textit{full-way} when the boundary surface is positioned at the most exterior solid node adjacent to a fluid node and \textit{half-way} when the surface is exactly halfway between a fluid node and a solid node. Fig.~\ref{fig:bounce-back_scheme} contains schematic diagrams demonstrating the effects of both bounce-back schemes. For the full-way bounce-back (FWBB), a fluid distribution simply reverses its velocity direction upon encountering a solid node and skips the next collision step, since collision does not happen at a solid node. For the half-way bounce-back (HWBB), a fluid distribution encountering a solid node is reversed in velocity direction and returns to the fluid node prior to streaming. The HWBB achieves second-order spatial accuracy for flat-wall boundaries, while the FWBB only has first-order spatial accuracy \cite{burel_improved_2018, ziegler_boundary_1993}.

\begin{figure}[!ht]
    \centering
    \includegraphics[width=\linewidth]{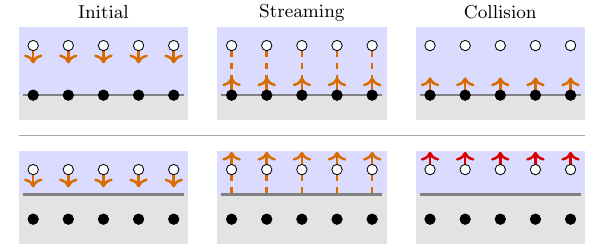}
    \caption{Schematic illustration of the \textbf{(top)} full-way and \textbf{(bottom)} half-way bounce-back schemes. White and black nodes denote fluid and solid nodes, respectively, and the flat plane marks the fluid-solid interface. The same dynamics apply in every streaming direction. In the bottom-right panel, the color change indicates the post-collision state.}
    \label{fig:bounce-back_scheme}
\end{figure}

We propose circuit implementations for the bounce-back schemes. Unlike the wall-free domain setting where a single QLBM step includes collision and streaming as two separate processes (Fig.~\ref{fig:LBM_workflow}), implementing the bounce-back schemes requires unifying collision, streaming, and other specific operators in each iteration. Let $\mathcal{C}$ be the collision operator that acts on $\mathcal{H}_V$ (the involved ancilla is ignored as it is initialized and post-selected in $\ket{0}$ upon successful collision) and $\mathcal{S}$ the streaming operator on $\mathcal{H}_D \otimes \mathcal{H}_V$. We define the direction reversal operator $\mathcal{R}: \ket{\mathbf{e}_i} \mapsto \ket{\mathbf{e}_{\bar i}}$, where $\bar{i}$ is the index of the reverse velocity, $\mathbf{c}_{\bar i} = - \mathbf{c}_i$. Since $\bar{\bar i} = i$, $\mathcal{R}$ can be implemented by $\mathsf{SWAP}$ gates between every qubit pair $(i,\bar i)$. In common lattices where velocities come in opposite pairs except for the zero velocity, $(q-1)/2$ $\mathsf{SWAP}$ gates are required. We also assume access to a query oracle that checks if a node is a solid node $\mathcal{Q}: \ket{\mathbf{x}}\ket{b} \mapsto \ket{\mathbf{x}} \ket{b \, \oplus \text{is\_solid}(\mathbf{x})}$. Denote $B$ as the ancilla whose binary state $\ket{b}$ indicates solid nodes. Bounce-back cannot be implemented as a standalone operator but rather integrated into a QLBM step that includes collision, streaming, and direction reversal. The FWBB and HWBB schemes are realized in an LBM step by
\begin{equation}
\mathcal{U}_{\text{FWBB}} = \mathcal{Q}^\dagger (|0\rangle\langle 0|_B \otimes \mathcal{C} + |1\rangle\langle 1|_B \otimes \mathcal{R}) \mathcal{Q} \, \mathcal{S}
\end{equation}
and 
\begin{equation}
    \mathcal{U}_{\text{HWBB}} = \mathcal{C} \mathcal{S} \mathcal{Q}^\dagger (|0\rangle\langle 0|_B \otimes \mathcal{S}^\dagger + |1\rangle\langle 1|_B \otimes \mathcal{R}) \mathcal{Q} \mathcal{S}.
\end{equation}
Fig.~\ref{fig:bounce-back} shows the quantum circuits for the two bounce-back schemes. It is straightforward to verify their action:
\begin{multline}
    \mathcal{U}_{\text{FWBB}} \ket{\mathbf{x}} \ket{\mathbf{e}_i} \ket{0} \qquad \qquad \qquad \qquad \qquad \qquad \\
= \left[ w\ket{\mathbf{x} + \mathbf{c}_i} \ket{\mathbf{e}_{\bar{i}}} + (1-w)\ket{\mathbf{x} + \mathbf{c}_i} \mathcal{C} \ket{\mathbf{e}_i}\right] \ket{0}
\end{multline}
and 
\begin{multline}
    \mathcal{U}_{\text{HWBB}} \ket{\mathbf{x}} \ket{\mathbf{e}_i} \ket{0} \qquad \qquad \qquad \qquad \qquad \qquad \\
= \left[ w \ket{\mathbf{x}}\mathcal{C}\ket{\mathbf{e}_{\bar{i}}} + (1-w) \ket{\mathbf{x} + \mathbf{c}_i}\mathcal{C}\ket{\mathbf{e}_i} \right] \ket{0},
\end{multline}
where $w = \text{is\_solid}(\mathbf{x} + \mathbf{c}_i) \in \{0,1\}$.

\begin{figure}[!ht]
    \centering
    \begin{subfigure}[b]{0.7\linewidth}
        \centering
        \includegraphics[width=\linewidth]{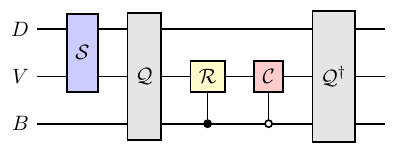}
        \caption{}
        \label{fig:FWBB}
    \end{subfigure}
    
    \vspace{0.5cm} 
    
    \begin{subfigure}[b]{1.0\linewidth}
        \centering
        \includegraphics[width=\linewidth]{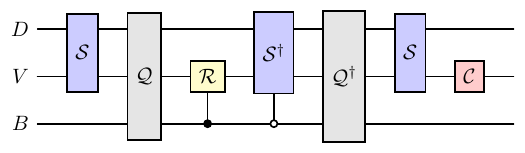}
        \caption{}
        \label{fig:HWBB}
    \end{subfigure}
    
    \caption{Quantum circuit for a full LBM step including streaming, collision, and \textbf{(a)} full-way bounce-back \textbf{(b)} half-way bounce-back.}
    \label{fig:bounce-back}
\end{figure}

\subsection{Extracting physical quantities}
In many applications, one may wish to extract physical quantities from the simulation. An example is the total force exerted by the fluid on a solid region $\mathcal{B}$. In the lattice Boltzmann framework, this quantity is typically evaluated using the momentum exchange method \cite{schalkers_momentum_2024}. Let $\mathcal{L}(\mathcal{B})$ denote the set of all fluid-solid links crossing the boundary of $\mathcal{B}$; each link is specified by a pair $(\mathbf{x},i)$, where $\mathbf{x}$ is a fluid node adjacent to $\mathcal{B}$ and $\mathbf{c}_i$ is a discrete velocity pointing from $\mathbf{x}$ into $\mathcal{B}$. For such a link, the incoming population $f_i(\mathbf{x},t)$ travels along $\mathbf{c}_i$ toward the solid and is reflected to $\mathbf{c}_{\bar i}=-\mathbf{c}_i$ under bounce-back, resulting in the fluid momentum change $\Delta \mathbf{p}(\mathbf{x},i) =-2 f_i(\mathbf{x},t)\,\mathbf{c}_i$. By Newton’s third law, the momentum transferred to the solid is $2 f_i(\mathbf{x},t)\mathbf{c}_i$. Summing over all links intersecting the solid region and dividing by the time step yields the hydrodynamic force on $\mathcal{B}$,
\begin{equation}
\mathbf{F}(\mathcal{B}) = \frac{2}{\Delta t}
\sum_{(\mathbf{x},i)\in \mathcal{L}(\mathcal{B})}
f_i(\mathbf{x},t)\,\mathbf{c}_i .
\end{equation}

For each Cartesian component $k\in\{1,\dots,d\}$, define the force observable associated with the solid region $\mathcal{B}$,
\begin{equation}
\hat{F}^{(\mathcal{B})}_k
=
\frac{2}{\Delta t}
\sum_{(\mathbf{x},i)\in \mathcal{L}(\mathcal{B})}
c_{ik}\;
\ket{\mathbf{x}}\bra{\mathbf{x}}_{D}
\otimes
\ket{\mathbf{e}_i}\bra{\mathbf{e}_i}_{V}.
\end{equation}
Taking the expectation value with respect to the global state $\ket{\psi}$ gives a force component.
\begin{equation}
\langle \hat{F}^{(\mathcal{B})}_k\rangle_\psi = \frac{2}{\Delta t}
\sum_{(\mathbf{x},i)\in \mathcal{L}(\mathcal{B})} c_{ik}\, f_i(\mathbf{x},t) = F_k(\mathcal{B}),
\end{equation}
Operationally, measuring $\hat{F}^{(\mathcal{B})}_k$ corresponds to measuring $(\mathbf{x},i)$ in the position-velocity space, assigning the value $\frac{2}{\Delta t} c_{ik}$ when $(\mathbf{x},i)\in\mathcal{L}(\mathcal{B})$ and $0$ otherwise, and averaging over repeated samples. The total force vector $\mathbf{F}(\mathcal{B})$ is thus estimated through the expectation of the collective observable $\hat{\mathbf{F}}^{(\mathcal{B})}=(\hat{F}^{(\mathcal{B})}_1,\dots,\hat{F}^{(\mathcal{B})}_d)$.

\subsection{Approaches for fully coherent QLBM simulations}
In the implementation protocol of our QLBM algorithm, the collision step is realized by the projector $\mathcal{D}$. With an abuse of notation, we also use $\mathcal{D}$ to represent the extension of the projector to the subspace of one-hot basis states. Implementing such a non-unitary on quantum circuits involves block-encoding in an enlarged system followed by post-selection on a specific state of ancillas. However, this process is non-deterministic as there is a nonzero probability that ancilla measurements yield undesired states. Even though each collision step might have a large success probability, the cumulative success probability decreases exponentially over multiple time steps. This section briefly examines potential approaches to address this problem.

One approach is to raise the success probability to near unity using oblivious amplitude amplification (OAA). However, performing OAA effectively for non-unitary operators is still an open question \cite{zecchi_improved_2025}. In fact, we can show that amplitude amplification does not increase the success probability of applying the projector $\mathcal{D}$.

\begin{lemma}
Let $U$ be a unitary block-encoding such that $U(\ket{0} \otimes \ket{\varphi}) = \ket{0} \otimes K\ket{\varphi} + \ket{1} \otimes \ket{\text{junk}_{\varphi}}$. If $K$ is an orthogonal projector, then no oblivious amplitude amplification protocol $\tilde{U}$ can increase the success probability while preserving the action of $K$ up to a scalar.
\end{lemma}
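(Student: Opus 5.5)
The plan is to formalize ``oblivious amplitude amplification protocol'' in the only sense compatible with the statement, and then show that the answer follows from the norm of a block of a unitary being at most one. An OAA protocol $\tilde U$ is any unitary on the same (possibly enlarged) ancilla$\otimes$system space, built from $U$, $U^\dagger$ and ancilla reflections. Its defining property is that it acts \emph{obliviously}: it is independent of the input $\ket{\varphi}$ and satisfies
\begin{equation*}
\tilde U(\ket{0}\otimes\ket{\varphi}) = \ket{0}\otimes c(\varphi)K\ket{\varphi} + \ket{\perp_\varphi}
\end{equation*}
for every $\ket{\varphi}$, where $\ket{\perp_\varphi}$ is orthogonal to the ancilla success state $\ket{0}$. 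The success probability of $\tilde U$ is then $|c(\varphi)|^2\|K\ket{\varphi}\|^2$, while that of $U$ is $\|K\ket{\varphi}\|^2$. So it suffices to show $|c(\varphi)|\le 1$ whenever $K\ket{\varphi}\neq 0$.

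The first step is to show that $c$ is constant. The map $\ket{\varphi}\mapsto(\bra{0}\otimes I)\tilde U(\ket{0}\otimes\ket{\varphi})=:A\ket{\varphi}$ is linear, and by assumption $A\ket{\varphi}=c(\varphi)K\ket{\varphi}$ for all $\ket{\varphi}$. First take $\ket{\varphi}=\ket{a}+\ket{b}$ with $\ket{a},\ket{b}\in\operatorname{ran}K$ linearly independent. Comparing $A(\ket{a}+\ket{b})=c(a)\ket{a}+c(b)\ket{b}$ with $c(a+b)(\ket{a}+\ket{b})$ forces $c(a)=c(b)$, so $c$ is a single constant $c$ on $\operatorname{ran}K$; if the rank is one this is trivial. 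Next, for a general $\ket{\varphi}=K\ket{\varphi}+(I-K)\ket{\varphi}$, linearity gives $A\ket{\varphi}=c\,K\ket{\varphi}+A(I-K)\ket{\varphi}$. Applying the hypothesis to the vector $(I-K)\ket{\varphi}$, which lies in $\ker K$, gives $A(I-K)\ket{\varphi}=0$. Hence $A=cK$ as an operator.

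The second step is the contraction bound. $A$ is a compression of the unitary $\tilde U$, so $\|A\|\le\|\tilde U\|=1$. Because $K$ is a nonzero orthogonal projector, $\|K\|=1$: it has a unit vector in its range, fixed by $K$. Therefore $|c|=\|cK\|=\|A\|\le 1$, and the success probability satisfies $|c|^2\|K\ket{\varphi}\|^2\le\|K\ket{\varphi}\|^2$ for every input. Equality is possible only when $|c|=1$, and then $\tilde U$ is no better than $U$. So no oblivious protocol increases the success probability while implementing $K$ up to a scalar.

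I would close with a remark explaining why this happens, and contrasting it with standard OAA. Standard OAA amplifies a block $K/\alpha$ with $\alpha>1$ when $K$ is close to a unitary, because the protocol applies the same polynomial to every singular value. A projector's singular values are $0$ and $1$, so the ones equal to $1$ already saturate the bound, and those equal to $0$ cannot be raised by any polynomial transformation. The failure probability $\|(I-K)\ket{\varphi}\|^2$ comes from the input's component outside $\operatorname{ran}K$, and a scalar-preserving map must annihilate that component. The main obstacle is the first step: the statement only says ``up to a scalar,'' so the argument must rule out a scalar that depends on $\ket{\varphi}$. Linearity of the block, together with the decomposition into $\operatorname{ran}K$ and $\ker K$, handles this. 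If the intended reading also allows protocols that are oblivious only to inputs in a subspace, I would restrict the same argument to that subspace.
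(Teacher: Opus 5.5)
Your proof is correct and follows essentially the same route as the paper: linearity of the success block forces the scalar to be a single constant $\lambda$, and since a compression of a unitary has all singular values at most $1$ while a nonzero orthogonal projector has a singular value equal to $1$, you get $|\lambda|\le 1$ and therefore no increase in success probability. Your split into the range and kernel of $K$ also gives an explicit justification that the scalar is constant, a step the paper only asserts.
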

\begin{proof}
    We first note that $K$ is the Kraus operator\footnote{A quantum channel $\Phi$ between Hilbert spaces $\mathcal{H}$ and $\mathcal{G}$ can be expressed via Kraus operators $\{B_i\}$ as $\Phi(\rho) = \sum_i B_i \rho B_i^\dagger$ such that $\sum_i B_i^\dagger B_i = I$.} defined by $K = (\bra{0} \otimes I)U(\ket{0} \otimes I)$. Physically, it represents the non-unitary transformation the system undergoes when the ancilla is successfully post-selected in the $\ket{0}$ state.
    
    An OAA protocol, given by a unitary $\tilde{U}$, defines a new Kraus operator $K'$ via $\tilde{U}(\ket{0} \otimes \ket{\varphi}) = \ket{0} \otimes K'\ket{\varphi} + \ket{1} \otimes \ket{\text{junk}'_{\varphi}}$. By the linearity of $\tilde{U}$ and the requirement that $K'\ket{\varphi} = \lambda(\varphi) K\ket{\varphi}$ for all $\ket{\varphi}$, the scalar must be a constant $\lambda$.

    Since $K'$ is a Kraus operator derived from a unitary and post-selection, its singular values $t_j$ must satisfy $|t_j| \leq 1$. Let $s_j$ be the singular values of $K$. The relation $K' = \lambda K$ implies $t_j = |\lambda| s_j$. Since $K$ is an orthogonal projector, its singular values are $s_j \in \{0, 1\}$. Choose $j$ where $s_j = 1$, then it follows that $|\lambda| \leq 1$. The success probability of the OAA process is $P_{\text{succ}}' = \|K'\ket{\varphi}\|^2 = |\lambda|^2 \|K\ket{\varphi}\|^2 \leq \|K\ket{\varphi}\|^2$. Thus, the success probability cannot be increased.
\end{proof}

Another approach is to implement the projection as a linear filter in terms of the Hamiltonian $H = I - \mathcal{D}$,
\begin{equation}
    \ket{\psi} \mapsto \frac{(I - H)\ket{\psi}}{\|(I - H)\ket{\psi}\|} = \frac{\mathcal{D} \ket{\psi}}{\|\mathcal{D} \ket{\psi}\|}.
\end{equation}
In the framework of \textit{double-bracket quantum algorithms} \cite{gluza_double-bracket_2026, suzuki_double-bracket_2025}, this mapping can be realized by the unitary operator $U_{\psi} = e^{s[\ket{\psi}\bra{\psi}, H]}$ with
\begin{equation}
    s_{\psi} = \frac{-1}{\sqrt{V_\psi}} \cos^{-1}\left( \frac{E_\psi - 1}{\sqrt{V_\psi + (E_\psi - 1)^2}}\right),
\end{equation}
where $E_\psi = \bra{\psi} H \ket{\psi}$ and $V_\psi = \bra{\psi} H^2 \ket{\psi} - E_\psi^2$. Since $H = I - \mathcal{D}$ is also a projector, it is straightforward to verify $s_{\psi} = - \cos^{-1}(-\sqrt{p}) / \sqrt{p(1-p)}$ with $p = \bra{\psi} \mathcal{D} \ket{\psi}$. The unitary $U_{\psi}$ can be approximated by the group commutator formula \cite{gluza_double-bracket_2024}
\begin{nalign}
     U_\psi =& \left( e^{i s_\psi^{(N)} \ket{\psi} \bra{\psi}} \,e^{i s_\psi^{(N)} H} e^{-i s_\psi^{(N)} \ket{\psi} \bra{\psi}} \,e^{-i s_\psi^{(N)} H} \right)^N \\
     & + O(s_\psi^{3/2} / \sqrt{N})
\label{eq:group_commutator}
\end{nalign}
for $s_\psi^{(N)} = \sqrt{|s_\psi| / N}$. This approximation allows implementing $U_{\psi}$ using the time-evolution operators $e^{\pm i s_\psi^{(N)} H}$ and the state-dependent phase operators $e^{\pm i s_\psi^{(N)} \ket{\psi} \bra{\psi}}$. We note that $H$ contains only two-body interactions,
\begin{equation}
    H = \sum_{i,j=1}^q (\delta_{ij} - \mathcal{D}_{ij}) \sigma^+_{V_i} \sigma^-_{V_j},
\end{equation}
where $\sigma^+= |1\rangle \langle 0|$ and $\sigma^-= |0\rangle \langle 1|$ and the notation $\mathcal{D}_{ij}$ represents the entry $(i,j)$ of the projector matrix $\mathcal{D} \in \mathbb{R}^{q \times q}$. The structure and sparsity of $H$ enable efficient compilations of the time evolution operators \cite{raeisi_quantum-circuit_2012, berry_exponential_2014, low_hamiltonian_2019, childs_hamiltonian_2012, gilyen_quantum_2019, berry_efficient_2007}. On the other hand, the phase operators can be realized via density matrix exponentiation methods \cite{lloyd_quantum_2014, rodriguez-grasa_quantum_2025}. Using an extra copy of $\ket{\psi}$, the protocol to implement the phase operators is expressed by the quantum channel
\begin{equation}
    \mathcal{T}[\sigma] = \mathrm{Tr}_2 [e^{i \Delta t \mathrm{SWAP} } (\sigma \otimes |\psi\rangle\langle\psi|) e^{-i \Delta t \mathrm{SWAP}} )],
\end{equation}
where $\mathrm{SWAP}$ denotes the swap operator between two registers. The error of this channel is bounded by
\begin{equation}
    \|\mathcal{T}[\sigma] - e^{i\Delta t |\psi\rangle\langle\psi| } \sigma e^{-i\Delta t |\psi\rangle\langle\psi| } \|_1 \leq \mathcal{O}(\Delta t^2)
\end{equation}
For $\Delta t = \pm s_\psi^{(N)}$, the protocol implements the phase operator $e^{ \pm i s_\psi^{(N)} \ket{\psi} \bra{\psi}}$ with error $\mathcal{O}(|s_\psi|/N)$. Combining these steps can facilitate a deterministic implementation of the projection $\ket{\psi} \mapsto U_\psi \ket{\psi} \approx \frac{\mathcal{D} \ket{\psi}}{\|\mathcal{D} \ket{\psi}\|}$, the key to fully coherent simulations of our QLBM algorithm. Even though this implementation allows for multi-timestep simulations without disruptive intermediate measurements, it would still require substantial quantum resources that are not available in the near term.

\section{Numerical Results}
\label{sec:numerics}
This section assesses the performance of the proposed QLBM algorithm applied to four problems: advection-diffusion under a Fourier-mode velocity, advection-diffusion of a Gaussian hill, flow of the Taylor-Green vortex, and flow around a cylinder. We compare the simulation results using the QLBM algorithm to those of the standard classical LBM (CLBM) algorithm and to the analytical solutions if computable. As required by the collision step of the QLBM algorithm, we fix unity relaxation rates $\tau = \tau_g = 1$, which correspond to kinematic viscosity $\nu = \frac{1}{6}$ for flow problems and diffusion coefficient $\kappa = \frac{1}{6}$ for advection-diffusion problems. The LBM simulations are initialized at the equilibrium distributions determined by the initial macroscopic fields, $f_i^{\text{eq}}(\mathbf{x},0) = f_i^{\text{eq}}(\rho(\mathbf{x},0), \mathbf{u}(\mathbf{x},0))$, whose expression is given by the second-order truncation in Eq.~\eqref{eq:equilibrium}. For each experiment, we run the simulations for $T=10000$ steps and keep track of the macroscopic density and velocity. The error of a macroscopic field $\mathbf{F}$ is quantified against the reference $\mathbf{F}_{\text{ref}}$ via the time-dependent relative $L_2$ error:
\begin{equation}
    \mathcal{E}_{\mathrm{Rel.\,} L_2}(t)  = \frac{\sum_{\mathbf{x}} \| \mathbf{F}(\mathbf{x},t) -  \mathbf{F}_{\text{ref}}(\mathbf{x},t) \|}{\sum_{\mathbf{x}} \|\mathbf{F}_{\text{ref}}(\mathbf{x},t) \|}
\end{equation}

\paragraph{ADE with Fourier-mode velocity}
We begin with a simple 1D advection-diffusion problem. The advection-diffusion equation (Eq.~\eqref{eq:advection_diffusion}) in the case of a periodic domain of length $L$, in which the advection velocity has the form $u_{\text{adv}}(x,t) = u_0 \cos(\lambda t)$, has an analytical solution. Note that this advection field is the solution to the Navier-Stokes equations with a pressure gradient $\nabla p \propto \sin(\lambda t)$. For example, when the initial concentration is $C(x,0) = C_0 + C_1 \cos(kx)$ for $k = 2\pi n/L$, $n\in \mathbb{Z}$, the time-dependent field is given by 
\begin{equation}
    C(x,t) = C_0 + C_1 e^{-\kappa k^2 t} \cos{(k(x-a(t)))},
\end{equation}
where $a(t) = \int_0^t u(s)\mathrm{d}s = \frac{u_0}{\lambda}\sin{\lambda t}$. 

Our experiment is conducted for the D1Q3 lattice with $L=256$, $u_0 = 0.1c_s$, $\lambda = 10^{-3}$, $C_0=1$, $C_1 = 0.5$, and $n=1$. The LBM distributions are initialized with $C(x,0)$ and $u(x,0) = u_0$. Fig.~\ref{fig:fourier} shows the solution obtained by the QLBM simulation and its accuracy. We find that the resulting field fully captures the time-dependent phase shift but slightly overshoots the analytical solution at local extrema, maintaining an error below $1\%$ throughout the simulation time.

\begin{figure}[!ht]
    \centering
    \begin{subfigure}[b]{\linewidth}
        \centering
        \includegraphics[width=\linewidth]{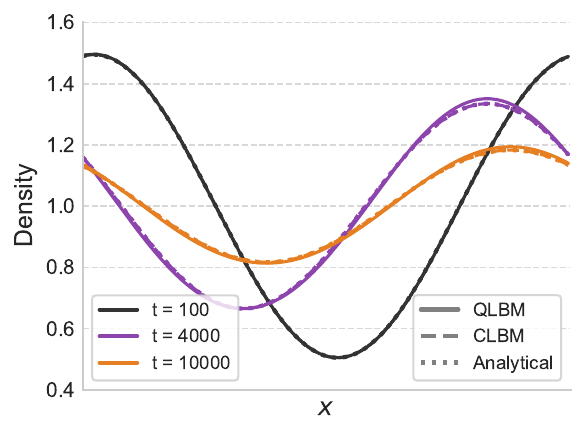}
        \caption{}
        \label{fig:fourier_plot}
    \end{subfigure}
    
    
    \begin{subfigure}[b]{1.0\linewidth}
        \centering
        \includegraphics[width=\linewidth]{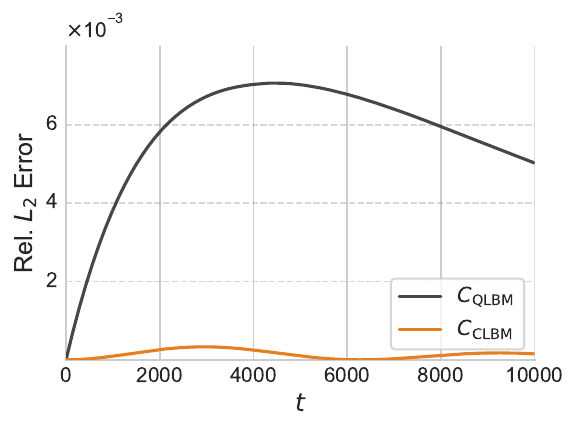}
        \caption{}
        \label{fig:fourier_error}
    \end{subfigure}
    
    \caption{\textbf{(a)} Concentration profiles $C(x,t)$ obtained with the QLBM at different times. \textbf{(b)} Relative $L_2$ error of the QLBM and CLBM solutions.}
    \label{fig:fourier}
\end{figure}

\paragraph{ADE of a Gaussian hill}
This test case examines the advection-diffusion process of a Gaussian hill in a 2D periodic domain under a constant uniform advection field $\mathbf{u}_{\text{adv}}$. Consider the isotropic initial profile $C(\mathbf{x},0) = C_0 \exp \left( - \frac{\|\mathbf{x} - \mathbf{x}_0\|^2}{2\sigma_0^2}\right)$, where $\mathbf{x}_0$ is the center of the initial Gaussian hill and $\sigma_0^2$ is the initial variance. The time evolution of the hill is given by 
\begin{equation}
    C(\mathbf{x},t) = \frac{C_0 \sigma_0^2}{\sigma_0^2 + \sigma_{\kappa}^2(t)} \exp \left( \frac{-\|\mathbf{x} - \mathbf{x}_0(t)\|^2}{2(\sigma_0^2 + \sigma_{\kappa}^2(t))} \right),
\end{equation}
where $\mathbf{x}_0(t) = \mathbf{x}_0 + \mathbf{u}_{\text{adv}}t$ and $\sigma_{\kappa}^2(t) = 2\kappa t$.

The experiment is conducted on a square domain of size $256 \times 256$ with $C_0 = 1$ and a constant advection field $\mathbf{u}_{\mathrm{adv}} = (0.3c_s,\,0.2c_s)^\top$. The LBM simulations employ the D2Q9 lattice and are initialized with the Gaussian profile $C(\mathbf{x},0)$ of width $\sigma_0 \in \{5,20,50\}$ and the prescribed advection field. Fig.~\ref{fig:gaussian} shows the evolution of the Gaussian hill and the simulation error. Similar to the 1D Fourier-mode velocity experiment, we find that the QLBM accurately captures the advective motion of the Gaussian hill, while its diffusion is underdamped. This behavior becomes more pronounced for smaller values of $\sigma_0$. The width $\sigma_0$ is related to the P\'eclet number ($\mathrm{Pe}$), which characterizes the relative strength of advection and diffusion:
\begin{equation}
\mathrm{Pe} = \frac{U l}{\kappa},
\end{equation}
where $U = \|\mathbf{u}_{\mathrm{adv}}\| \, \approx 0.208$ is the advection speed in lattice units, $l = \sigma_0$ is the characteristic length scale of the Gaussian hill, and the diffusivity $\kappa = 1/6$. This yields
$\mathrm{Pe} \approx 1.25\,\sigma_0$. Consequently, smaller $\sigma_0$ corresponds to a lower-$\mathrm{Pe}$, diffusion-dominated regime, in which the underdamped diffusion of the QLBM leads to larger errors.

\begin{figure}[!ht]
    \centering
    \begin{subfigure}[b]{0.75\linewidth}
        \centering
        \includegraphics[width=\linewidth]{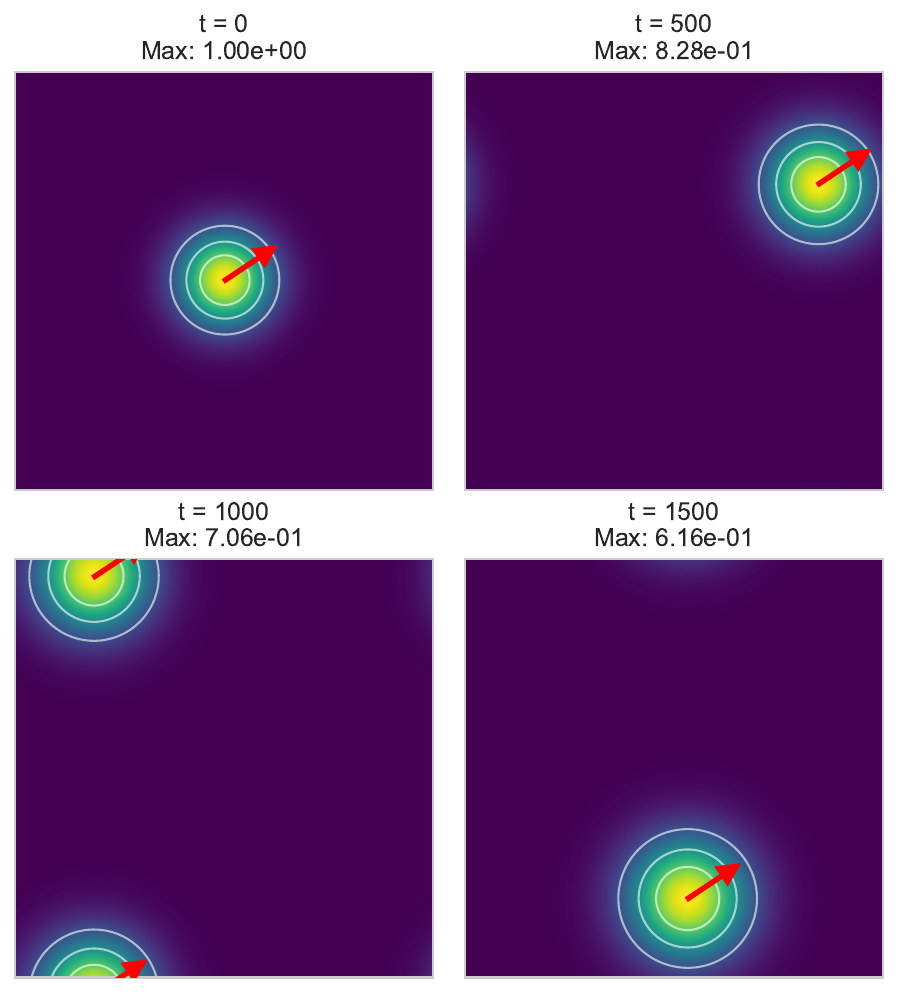}
        \label{fig:gaussian_field}
    \end{subfigure}
    
    \begin{subfigure}[b]{0.8\linewidth}
        \centering
        \includegraphics[width=\linewidth]{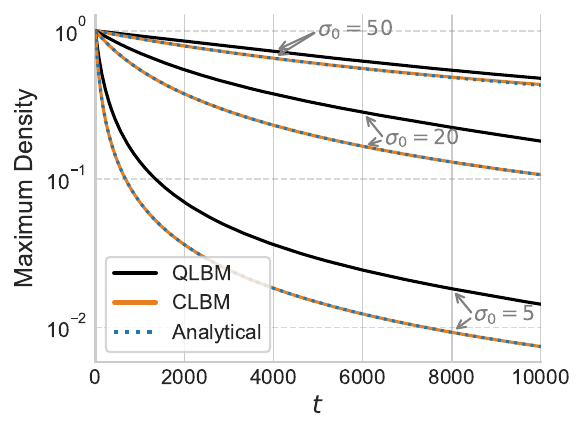}
        \label{fig:gaussian_decay}
    \end{subfigure}
    
    \begin{subfigure}[b]{0.8\linewidth}
        \centering
        \includegraphics[width=\linewidth]{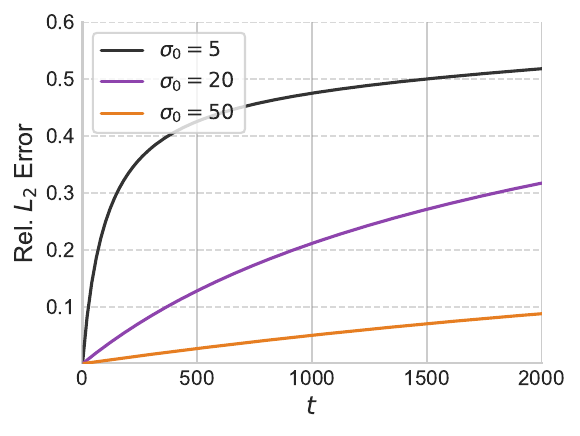}
        \label{fig:gaussian_error}
    \end{subfigure}
    
    \caption{\textbf{(top)} Time evolution of a Gaussian hill initialized with $\sigma_0 = 20$ in a periodic square domain; the red arrow indicates the advection direction $\mathbf{u}_{\text{adv}}$ and contours denote $75\%, 50\%$, and $25\%$ of the peak value. \textbf{(middle)} Diffusive decay of the peak density at the hill center. \textbf{(bottom)} Relative $L_2$ error of the QLBM simulations.}
    \label{fig:gaussian}
\end{figure}

\paragraph{Taylor-Green vortex decay}
This experiment tests how QLBM captures the temporal decay in a flow problem. The Taylor-Green flow is a periodic flow in a domain of size $L_x \times L_y$, specified by the velocity and pressure fields
\begin{equation}
    \mathbf{u}(\mathbf{x},t) = u_0 e^{-t/t_d} \begin{bmatrix} - \sqrt{k_y/k_x} \cos(k_x x) \sin(k_y y) \\ \sqrt{k_x/k_y} \sin(k_x x) \cos(k_y y)\end{bmatrix} 
\end{equation}
and
\begin{nalign}
    p(\mathbf{x},t) &= \rho_0 c_s^2 - \rho_0 \frac{u_0^2}{4} e^{-2t/t_d} \\
    & \; \; \times \left( \frac{k_y}{k_x} \cos(2k_x x) + \frac{k_x}{k_y} \cos(2k_y y) \right)
\end{nalign}
where $u_0$ is the initial velocity scale and $\rho_0$ is an arbitrary average density. Also, $(k_{x}, k_y) = (2\pi n_x/L_x, 2\pi n_y /L_y)$, $n_x, n_y \in \mathbb{Z}$, is the wave vector and $t_d = \frac{1}{\nu(k_x^2 + k_y^2)}$ the decay time scale. The pressure field is related to the density field via $p= \rho c_s^2$. 

The simulations are performed on a square $256 \times 256$ domain using the D2Q9 lattice, with $n_x = n_y = 1$, $\rho_0 = 1$, and $u_0 = 0.1c_s$. The initial state is specified by $\mathbf{u}(\mathbf{x},0)$ and $\rho(\mathbf{x},0)$. For the QLBM, a fixed reference velocity $\hat{\mathbf{u}} = (0,0)$ is used at every collision step. Fig.~\ref{fig:taylor_green} shows the velocity's magnitude and its error heatmap for the QLBM, as well as the relative error of the macroscopic fields. The velocity field in QLBM exhibits a large discrepancy early in the simulation, but the error decreases gradually. This is because the reference velocity $\hat{\mathbf{u}} = \mathbf{0}$ is only asymptotically exact; the velocity in early iterations is sinusoidal with magnitude $\sim u_0$. This suggests that choosing a good reference velocity is crucial to the accuracy of the QLBM simulation.

\begin{figure}[!ht]
    \centering
    \begin{subfigure}[b]{0.75\linewidth}
        \centering
        \includegraphics[width=\linewidth]{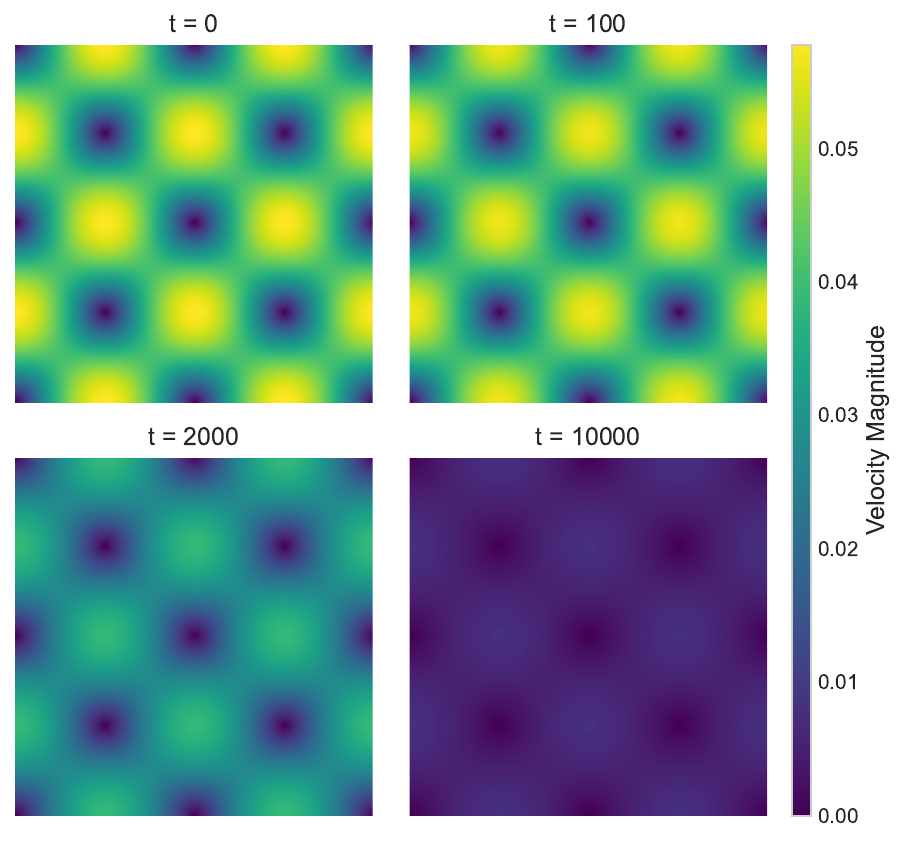}
        \label{fig:tg_field}
    \end{subfigure}

    
    \begin{subfigure}[b]{0.75\linewidth}
        \centering
        \includegraphics[width=\linewidth]{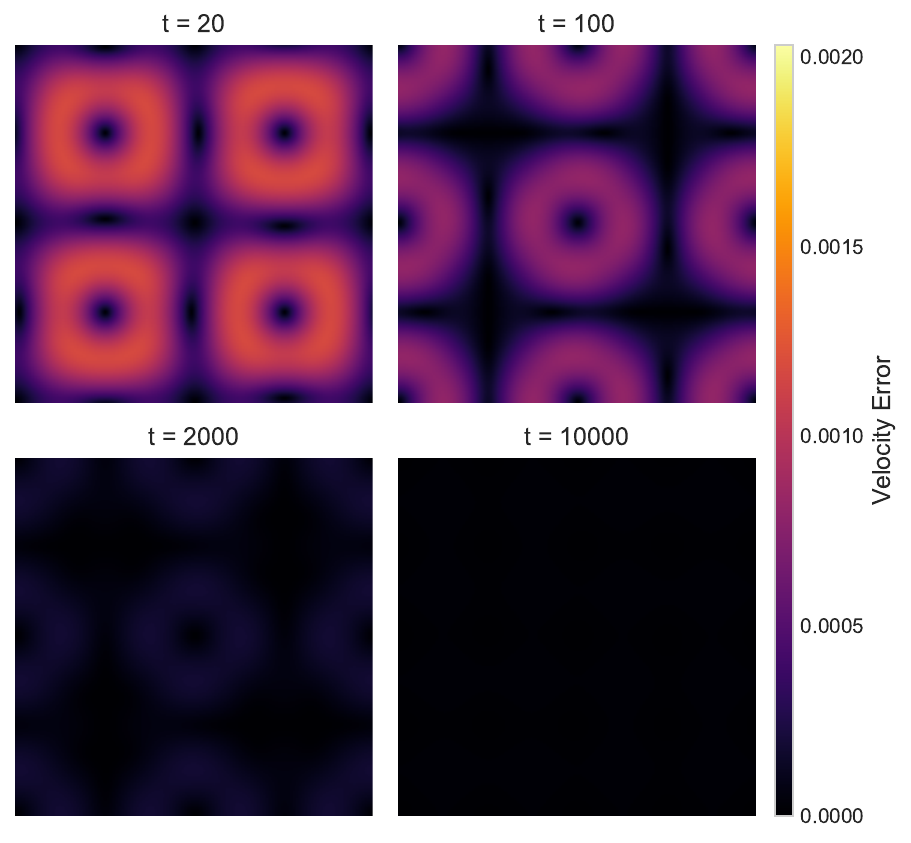}
        \label{fig:tg_heatmap}
    \end{subfigure}
    
    
    \begin{subfigure}[b]{0.8\linewidth}
        \centering
        \includegraphics[width=\linewidth]{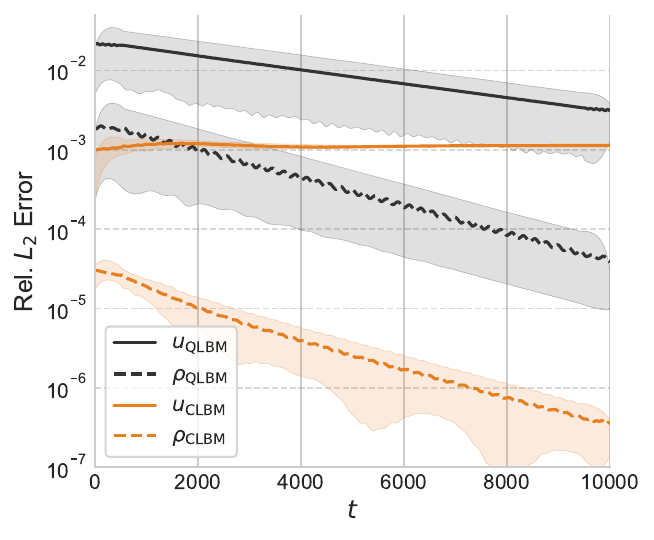}
        \label{fig:tg_error}
    \end{subfigure}
    
    \caption{\textbf{(top)} Temporal decay of the speed $\|\mathbf{u}\|$. \textbf{(middle)} Velocity error heatmaps. \textbf{(bottom)} Relative $L_2$ error of the density and velocity fields for the QLBM and CLBM solutions, shown as moving averages with shaded fluctuation envelopes.}
    \label{fig:taylor_green}
\end{figure}

\paragraph{Flow around a cylinder} This experiment demonstrates how the reference velocity affects a practical LBM simulation. The test case simulates the fluid flow in a 2D pipe with a circular obstacle. The geometry setting is a rectangular pipe of size $L_x \times L_y$ with periodic boundary and a solid circular cylinder of radius $L_y/8$ centered at $(L_y/2, L_y/2)$. The initial velocity applied to the left boundary is a parabolic profile $\mathbf{u}((0,y),t=0) = \frac{4u_0 y(L_y -y)}{L_y^2}$, where $u_0$ is the velocity scale.

We fix $L_x = 512, L_y = 128$, and $u_0 = 0.1c_s$. The LBM initial state is defined for the D2Q9 lattice with the uniform density $\rho(\mathbf{x},0) = 1$ and the parabolic profile $\mathbf{u}(\mathbf{x},0)$. We implement the no-slip boundary condition with half-way bounce-back at the boundary of the cylinder.

Figure~\ref{fig:cylinder_field} shows the velocity fields obtained with the CLBM
and QLBM. The QLBM simulation uses the reference velocity $\hat{\mathbf{u}} = \left(\tfrac{u_0}{3},\,0\right)$ and produces results comparable to the CLBM solution. This reference velocity is chosen \emph{a posteriori} as an approximation to the mean velocity of the CLBM flow at steady state. Compared to other choices, it yields lower error and higher stability, as shown in Fig.~\ref{fig:cylinder}. These results indicate that an accurate estimate of the flow velocity used as reference significantly improves the accuracy of the QLBM simulation.

\begin{figure}[!ht]
    \centering
    \includegraphics[width=\linewidth]{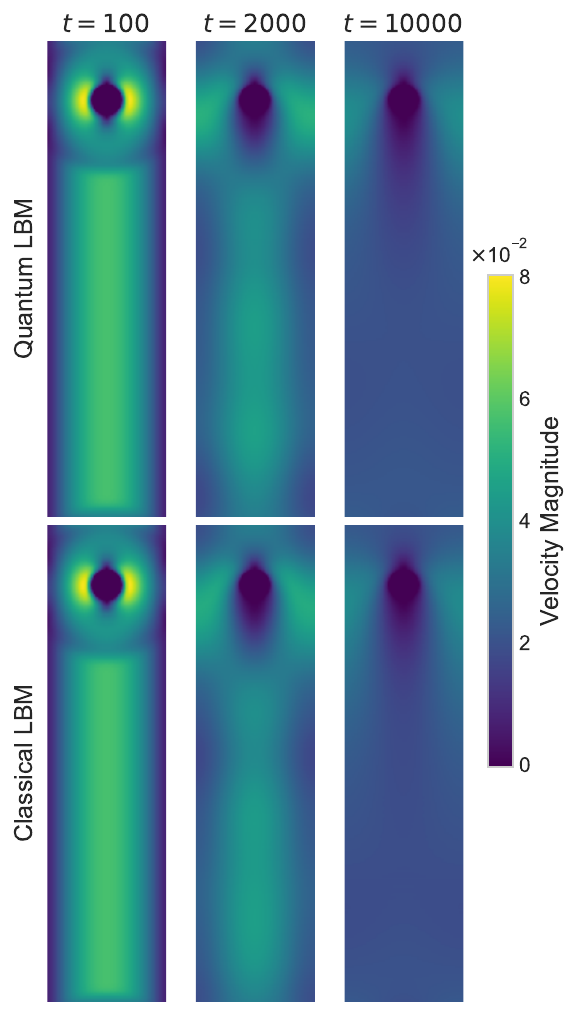}
    \caption{The speed field $\|\mathbf{u}\|$ from QLBM and CLBM at different times. Each plot is rotated by $90^\circ$ clockwise for visualization, so the original left boundary becomes the top edge.}
    \label{fig:cylinder_field}
\end{figure}

\begin{figure}[!ht]
    \centering
    \begin{subfigure}[b]{1.0\linewidth}
        \centering
        \includegraphics[width=\linewidth]{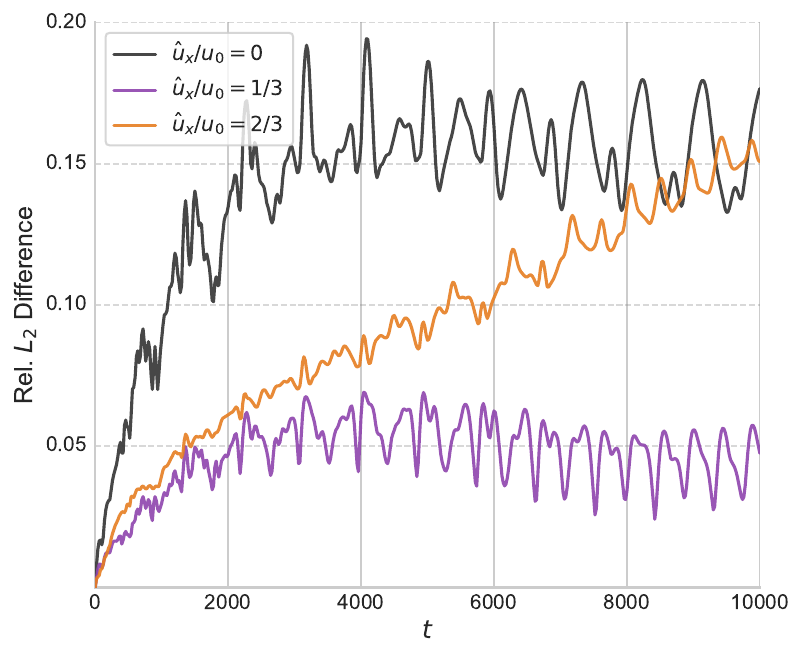}
        \caption{}
        \label{fig:cylinder_error}
    \end{subfigure}
    
    \begin{subfigure}[b]{1.0\linewidth}
        \centering
        \includegraphics[width=\linewidth]{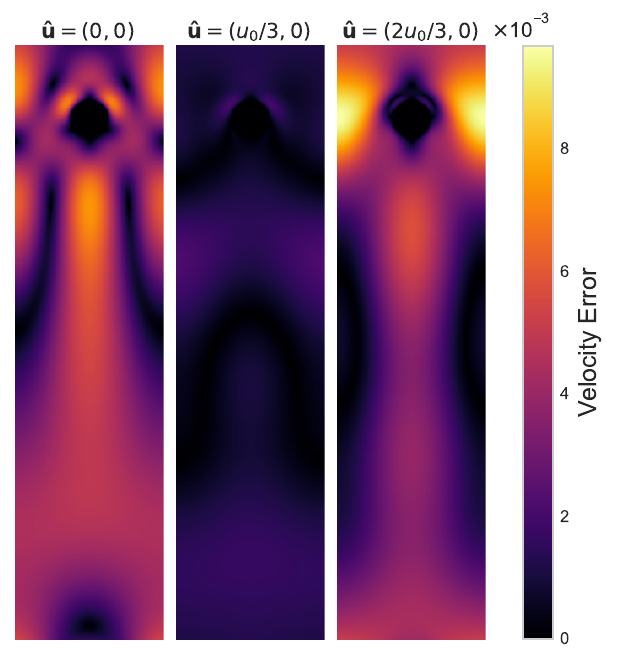}
        \caption{}
        \label{fig:cylinder_heatmap}
    \end{subfigure}
    
    \caption{\textbf{(a)} Relative $L_2$ difference between the QLBM solution and the CLBM solution for the horizontally oriented reference velocity $\hat{\mathbf{u}} = (\hat{u}_x, 0)$. \textbf{(b)} Error heatmaps for the velocity field at $t=10000$.}
    \label{fig:cylinder}
\end{figure}

\section{Conclusion and Outlook}
\label{sec:conclusion}
In this paper, we present a comprehensive quantum implementation framework for the Lattice Boltzmann method, based on one-hot encoding and a denoising-based collision operator. By formulating the collision step as an orthogonal projection onto the local linearization of the equilibrium manifold, we have demonstrated an alternative path to introducing the nonlinear physics of fluid flow within the constraints of linear quantum evolution. This approach effectively filters out non-equilibrium noise while preserving the essential nonlinearity in the Lattice Boltzmann equation.

Our theoretical framework, supported by Theorems \ref{thm:collision_error} and \ref{thm:denoising-error}, provides error bounds on the approximations inherent in this denoising approach. We established that the collision error is governed by the velocity misalignment and the intrinsic strain-rate tensor of the flow, while the manifold projection error is intrinsically linked to the curvature of the equilibrium manifold as quantified by its constant Hessian. These results suggest that the QLBM remains robust in regimes where the reference velocity tracks the macroscopic flow.

Numerical simulations across various benchmarks, including advection-diffusion problems, the Taylor-Green vortex, and flow around a cylinder, confirm the practical usability of the algorithm. The results indicate that the QLBM can capture advective motion and temporal decay, although it may exhibit underdamped diffusion in low-Péclet number regimes. A key takeaway from the numerical results is that the selection of an appropriate reference velocity is critical for ensuring accuracy and numerical stability.

Future research could focus on extending this framework to accommodate varying relaxation times $(\tau \neq 1)$ and developing automated, position-varying schemes for reference velocity updates to further enhance the accuracy of the denoising operator in complex flows. Finally, developing a deterministic multi-timestep evolution framework and integrating boundary conditions beyond the bounce-back schemes remain a challenge for applying QLBM to practical problems in computational fluid dynamics.

\paragraph{AI usage}
Large language models were used in the preparation of this manuscript. Google’s Gemini assisted with programming and visualization, while OpenAI’s ChatGPT supported manuscript organization and language editing. All scientific content, interpretations, and conclusions were produced by the authors, who take full responsibility for the work.

\paragraph{Acknowledgement} This work was supported by the Interdisciplinary Doctoral Program in Quantum Systems Integration funded by the BMW group. We thank C\u{a}lin Georgescu for insightful discussions.

\printbibliography
\bigskip

\onecolumn
\appendix

\section{Symmetry equivariance}
\label{appdx:equivariance}
Let $G$ be the symmetry group of the LBM lattice. A lattice symmetry $g \in G$ is an orthogonal transformation $R_g \in O(d)$ in physical space consisting of rotations and reflections that preserve the lattice structure. It acts on the discrete velocity set via a permutation $\sigma_{g} \in S_q$ such that $\mathbf{c}_{\sigma_g(i)} = R_g \mathbf{c}_i$. Consequently, its action on population vectors $\mathbf{f} = (f_1, \dots, f_q)$ is defined by the permutation matrix $P_g \in \mathbb{R}^{q \times q}$, with $(P_g \mathbf{f})_i = f_{\sigma^{-1}_g(i)}$. Since the collision operator $\Omega = \mathcal{D}(\hat{\mathbf{u}})$ depends on the reference velocity, this reference must be transformed accordingly when the collision operator is applied to a permuted population vector. That is, $\Omega[g \cdot \mathbf{f}] = \mathcal{D}(R_g \hat{\mathbf{u}}) P_g  \mathbf{f}$. The symmetry equivariance condition (Eq.~\eqref{eq:collision_symm_equiv}) therefore takes the form $\mathcal{D}(R_g \hat{\mathbf{u}}) P_g \mathbf{f} = P_g  \mathcal{D}(\hat{\mathbf{u}}) \mathbf{f}$ for every $\mathbf{f} \in \mathbb{R}^q$. We will prove the equivalent operator identity, $\mathcal{D}(R_g \hat{\mathbf{u}}) = P_g  \mathcal{D}(\hat{\mathbf{u}}) P_g ^\top$.

The first step is to prove $P_g \mathbf{h}(\mathbf{u}) = \mathbf{h}(R_g \mathbf{u})$. By definition of the permutation matrix $P_g$, $(P_g \mathbf{h}(\mathbf{u}))_i = h_{\sigma^{-1}_g(i)} (\mathbf{u})$. The equilibrium amplitude $\mathbf{h}(\mathbf{u})$ is constructed from the discrete velocities and weights such that each component depends on $\mathbf{u}$ through the scalar quantities $\mathbf{c}_i \cdot \mathbf{u}$ and $\|\mathbf{u}\|$. Since the lattice weights are invariant under the symmetry, and $R_g$ is orthogonal, we have
\begin{equation}
    \mathbf{c}_{\sigma_g^{-1}(i)} \cdot \mathbf{u} = (R_g^\top \mathbf{c}_i) \cdot \mathbf{u} = \mathbf{c}_i \cdot (R_g \mathbf{u}), \quad \|\mathbf{u}\| = \|R_g \mathbf{u}\|
\end{equation}
Then, $(P_g \mathbf{h}(\mathbf{u}))_i = h_{\sigma^{-1}_g(i)} (\mathbf{u}) = h_i(R_g \mathbf{u})$ holds for every component $i$. Hence $P_g \mathbf{h}(\mathbf{u}) = \mathbf{h}(R_g \mathbf{u})$. We next evaluate the transformation of the Jacobian of the equilibrium amplitudes. Recall that the denoising operator is defined as
\begin{equation}
\mathcal{D}(\hat{\mathbf{u}})
=
\bar J(\hat{\mathbf{u}})
\big(\bar J(\hat{\mathbf{u}})^\top \bar J(\hat{\mathbf{u}})\big)^{-1}
\bar J(\hat{\mathbf{u}})^\top,
\end{equation}
where $\bar J(\hat{\mathbf{u}}) = [\mathbf{h}(\hat{\mathbf{u}}), \partial_1 \mathbf{h}(\hat{\mathbf{u}}), \dots, \partial_d \mathbf{h}(\hat{\mathbf{u}})] \in\mathbb{R}^{q\times(d+1)}$ denotes the scaled Jacobian of $\mathbf{g}(\rho, \mathbf{u}) = \sqrt{\rho}\mathbf{h}(\mathbf{u})$ evaluated at the reference velocity $\hat{\mathbf{u}}$. Differentiating the identity $P_g \mathbf{h}(\mathbf{u}) = \mathbf{h}(R_g \mathbf{u})$ and evaluating at $\mathbf{u}=\hat{\mathbf{u}}$ yields $P_g \, \nabla \mathbf{h}(\hat{\mathbf{u}}) = \nabla \mathbf{h}(R_g \hat{\mathbf{u}}) \, R_g$. Combining the original identity and its derivatives, we obtain
\begin{equation}
P_g \bar J(\hat{\mathbf{u}})
=
\bar J(R_g \hat{\mathbf{u}})\, Q_g,
\end{equation}
where $Q_g = \begin{pmatrix} 1 & 0 \\ 0 & R_g \end{pmatrix}$ is an orthogonal operator. We now compare the denoising operators at $\hat{\mathbf{u}}$ and $R_g\hat{\mathbf{u}}$. Using the relation above, we compute
\begin{nalign}
P_g \mathcal{D}(\hat{\mathbf{u}}) P_g^\top
&=
P_g \bar J(\hat{\mathbf{u}})
\big(\bar J(\hat{\mathbf{u}})^\top \bar J(\hat{\mathbf{u}})\big)^{-1}
\bar J(\hat{\mathbf{u}})^\top P_g^\top \\
&=
\bar J(R_g \hat{\mathbf{u}})\, Q_g
\big(Q_g^\top \bar J(R_g \hat{\mathbf{u}})^\top
\bar J(R_g \hat{\mathbf{u}})\, Q_g\big)^{-1}
Q_g^\top \bar J(R_g \hat{\mathbf{u}})^\top \\
&= 
\bar J(R_g \hat{\mathbf{u}})\, Q_g \,
Q_g^{-1} \big(\bar J(R_g \hat{\mathbf{u}})^\top
\bar J(R_g \hat{\mathbf{u}}) \big)^{-1} Q_g^{-\top}
Q_g^\top \bar J(R_g \hat{\mathbf{u}})^\top \\
&= \bar J(R_g \hat{\mathbf{u}}) \big(\bar J(R_g \hat{\mathbf{u}})^\top
\bar J(R_g \hat{\mathbf{u}}) \big)^{-1} J(R_g \hat{\mathbf{u}})^\top = \mathcal{D}(R_g \hat{\mathbf{u}}).
\end{nalign}
This completes the proof for the equivariance with respect to the symmetry group of the lattice, $\mathcal{D}(R_g \hat{\mathbf{u}}) P_g = P_g \mathcal{D}(\hat{\mathbf{u}})$ for any $g \in G$.

\section{Error analysis}
\label{appdx:error_analysis}
\subsection*{Collision error}
\label{appdx:collision_error}
This part estimates the error of using the projection onto the linearized manifold as the collision operator. Let $\hat{V}$ be the tangent space at a reference velocity $\hat{\mathbf{u}}$ of the equilibrium manifold $\mathcal{M} \subset \mathbb{R}^q$, which is defined by $\mathbf{g}(\rho, \mathbf{u}) = \sqrt{\rho} \, \mathbf{h}(\mathbf{u})$ (Eq. \eqref{eq:velocity_component}). Note that this tangent space is invariant with respect to density $\rho$ and contains the vectors $\mathbf{g}(\rho, \hat{\mathbf{u}})$. Let the post-streaming amplitudes at a lattice node be $\tilde{\mathbf{g}} = \mathbf{g}(\rho^{\text{str}}, \mathbf{u}^{\text{str}}) + \boldsymbol{\xi}$, where $\boldsymbol{\xi}$ represents the non-equilibrium deviation. We evaluate the error $\mathcal{E} = \| \Pi_{\hat{V}} \tilde{\mathbf{g}} - \mathbf{g}(\rho^{\text{str}}, \mathbf{u}^{\text{str}}) \|$ of approximating the true equilibrium state via projection onto $\hat{V}$. Using $\mathbf{g} = \Pi_{\hat{V}} \mathbf{g} + \Pi_{\hat{V}^\perp}\mathbf{g}$, where $\hat{V}^\perp$ is the orthogonal complement to $\hat{V}$, we can rewrite the error as
\begin{equation}
    \mathcal{E} = \| - \Pi_{\hat{V}^\perp} \mathbf{g}(\rho^{\text{str}}, \mathbf{u}^{\text{str}}) + \Pi_{\hat{V}} \boldsymbol{\xi}\|
\end{equation}
By the triangle inequality, $\mathcal{E} \leq \| \Pi_{\hat{V}^\perp} \mathbf{g}(\rho^{\text{str}}, \mathbf{u}^{\text{str}}) \| + \| \Pi_{\hat{V}} \boldsymbol{\xi}\|$. To bound the first error term, we expand $\mathbf{g}(\rho^{\text{str}}, \mathbf{u}^{\text{str}})$ about $\hat{\mathbf{u}}$:
\begin{equation}
    \mathbf{g} = \sqrt{\rho^{\text{str}}} \left( \mathbf{h}(\hat{\mathbf{u}}) + J \Delta \mathbf{u} + \frac{\Delta\mathbf{u}^\top \mathbf{H} \Delta\mathbf{u}}{2} \right)
\end{equation}
where $\Delta \mathbf{u} = \mathbf{u}^{\text{str}} - \hat{\mathbf{u}}$, and $J$ and $\mathbf{H}$ are the Jacobian and Hessian of $\mathbf{h}$ at $\hat{\mathbf{u}}$, respectively. Since $\mathbf{h}(\mathbf{u})$ is quadratic, the expansion stops at the second order and the Hessian is independent of $\hat{\mathbf{u}}$. By definition, $\mathbf{h}(\hat{\mathbf{u}})$ and the columns of $J$ span $\hat{V}$. The projection onto $\hat{V}^\perp$ annihilates these terms, leaving only the quadratic part,
\begin{nalign}
    \| \Pi_{\hat{V}^\perp} \mathbf{g}(\rho^{\text{str}}, \mathbf{u}^{\text{str}}) \| 
    &= \frac{\sqrt{\rho^{\text{str}}}}{2} \| \Pi_{\hat{V}^\perp} \Delta\mathbf{u}^\top \mathbf{H} \Delta\mathbf{u}\| \\
    &\leq \frac{\sqrt{\rho^{\text{str}}}}{2} \| \Delta\mathbf{u}^\top \mathbf{H} \Delta\mathbf{u}\| \\
    &\leq \frac{\sqrt{\rho^{\text{str}}}}{2} \|\mathbf{H}\| \|\Delta \mathbf{u}\|^2
\end{nalign}

To understand the second error term, we relate the deviation $\boldsymbol{\xi}$ to standard quantities in LBM. A distribution $\tilde{f}_i $ can be decomposed into the equilibrium part $f_i^{\text{eq}}$ and the non-equilibrium part $f_i^{\text{neq}}$ via $\tilde{f}_i = f_i^{\text{eq}} + f_i^{\text{neq}}$. Assuming $f_i^{\text{neq}} \ll f_i^{\text{eq}}$, we approximate $\tilde{g}_i = \sqrt{f_i^{\text{eq}} + f_i^{\text{neq}}} \approx \sqrt{f_i^{\text{eq}}} \left( 1 + \frac{f_i^{\text{neq}}}{2f_i^{\text{eq}}}\right)$. The deviation $\boldsymbol{\xi}$ therefore has the first-order approximation $\xi_i \approx \frac{f_i^{\text{neq}}}{2\sqrt{f_i^{\text{eq}}(\rho^{\text{str}}, \mathbf{u}^{\text{str}})}}$. On the other hand, it can be shown that applying the Chapman-Enskog expansion to LBM leads to the approximation \cite{kruger_lattice_2017}
\begin{equation}
    f_i^{\text{neq}} \approx - \rho \frac{w_i \tau}{c_s^2} \sum_{k,l=1}^d Q_{ikl} \, \partial_k u_l
\end{equation}
where $Q_{ikl} = c_{ik} c_{il} - \delta_{kl} c_s^2$ is the velocity tensor. The velocity gradient tensor $\partial_k u_l$ has a unique decomposition into a symmetric part and an antisymmetric part
\begin{equation}
    \partial_k u_l = \frac{1}{2} (\partial_k u_l + \partial_l u_k) + \frac{1}{2} (\partial_k u_l - \partial_l u_k),
\end{equation}
where $(\partial_k u_l + \partial_l u_k)/2 =: S_{kl}$ is called the strain-rate tensor and $(\partial_k u_l - \partial_l u_k)/2 =: \Omega_{kl}$ the vorticity tensor. Since $Q_{ikl}$ is symmetric with respect to the indices $k$ and $l$, the antisymmetric vorticity tensor does not contribute to the non-equilibrium for $\sum_{k,l} Q_{ikl} \, \Omega_{kl} = 0$. So, the non-equilibrium part is related to the strain rate via $f_i^{\text{neq}} \approx - \rho \frac{w_i \tau}{c_s^2} \sum_{k,l=1}^d Q_{ikl} \, S_{kl}$. Then
\begin{equation}
    \xi_i \approx -\frac{\tau}{2c_s^2}  \sqrt{\rho w_i} \sum_{k,l} Q_{ikl} \, S_{kl}
\end{equation}

We use isotropy conditions to evaluate the fourth moment
\begin{equation}
    \sum_i w_i Q_{ikl} Q_{irs} = 2c_s^4 \left( \delta_{kr} \, \delta_{ls} + \delta_{ks} \, \delta_{lr} \right)
\end{equation}
Substitute this into the Euclidean norm $\|\boldsymbol{\xi}\|$:
\begin{nalign}
    \|\boldsymbol{\xi}\|^2 &
    \approx \frac{\tau^2 \rho}{4c_s^4} \sum_{i} \sum_{k,l,r,s}  w_i (Q_{ikl} \, S_{kl}) (Q_{irs} \, S_{rs}) \\
    &= \frac{\tau^2 \rho}{4} \sum_{k,l} (S_{kl} S_{kl} + S_{kl} S_{lk}) \\
    &= \frac{\tau^2 \rho}{2} \sum_{k,l} S_{kl} S_{kl} = \frac{\tau^2 \rho}{2} \|S\|_{\mathrm{F}}^2,
\end{nalign}
where $\|S\|_{\mathrm{F}}$ is the Frobenius norm of the strain-rate tensor. Since $\tau = 1$ in our setting, the second error term is bounded by $\| \Pi_{\hat{V}} \boldsymbol{\xi}\| \leq \|\boldsymbol{\xi}\| \approx \sqrt{\frac{\rho^{\text{str}}}{2}} \|S(\mathbf{u}^{\text{str}})\|_{\mathrm{F}}$. Combining the two sources of error, we have
\begin{equation}
    \mathcal{E} \leq \frac{\sqrt{\rho^{\text{str}}}}{2} \left( \|\mathbf{H}\| \|\Delta \mathbf{u}\|^2 + \sqrt{2} \|S(\mathbf{u}^{\text{str}})\|_{\mathrm{F}} \right) 
\end{equation}
The first error term is a misalignment error that represents how far the reference velocity is from the post-streaming velocity. The second term predicts an intrinsic error in the regions where the flow has intense velocity gradients.

\subsection*{Manifold projection error}
\label{appdx:projection_error}
This part evaluates the error of the denoising operator as an approximate projection onto the manifold. We reuse the notations defined in the previous analysis of the collision error. The manifold projection error is measured by
\begin{equation}
    \mathcal{E}^\perp = \text{dist}(\Pi_{\hat{V}} \tilde{\mathbf{g}}, \mathcal{M}) = \min_{\rho,\mathbf{u}} \| \Pi_{\hat{V}} \tilde{\mathbf{g}} - \sqrt{\rho} \mathbf{h}(\mathbf{u})\|
\end{equation}
The distance square decomposes into a tangential term, $\| \Pi_{\hat{V}} (\Pi_{\hat{V}} \tilde{\mathbf{g}} - \sqrt{\rho}\mathbf{h}(\mathbf{u}))\|^2 = \| \Pi_{\hat{V}} (\tilde{\mathbf{g}} - \sqrt{\rho}\mathbf{h}(\mathbf{u}))\|^2$,  and a normal term, $\| \Pi_{\hat{V}^\perp} (\Pi_{\hat{V}} \tilde{\mathbf{g}} - \sqrt{\rho}\mathbf{h}(\mathbf{u}))\|^2 = \| \Pi_{\hat{V}^\perp} \sqrt{\rho} \mathbf{h}(\mathbf{u})\|^2$

We demonstrate that the first term vanishes at specific macroscopic values. The main tool is the Newton-Kantorovich theorem \cite{ciarlet_newtonkantorovich_2012}, which provides sufficient conditions for the existence and uniqueness of a solution to the equation $\Pi_{\hat{V}} \sqrt{\rho} \mathbf{h}(\mathbf{u}) = \Pi_{\hat{V}}  \tilde{\mathbf{g}}$.

Let $\mathcal{F}(\rho, \mathbf{u}) = \Pi_{\hat{V}} \sqrt{\rho} \mathbf{h}(\mathbf{u})$ be the mapping from the macroscopic quantities to the tangent space. We seek a root $(\rho^\dagger, \mathbf{u}^\dagger)$ for the equation $\mathcal{F}(\rho, \mathbf{u}) - \Pi_{\hat{V}} \tilde{\mathbf{g}} = 0$ following the Newton's method starting from the initial guess $(\rho_0, \mathbf{u}_0) = (\hat\rho, \hat{\mathbf{u}})$. First, we evaluate the Jacobian $\hat{D}_{\mathcal{F}} = D_{\mathcal{F}}(\hat\rho, \hat{\mathbf{u}})$:
\begin{nalign}
    \partial_{\rho} \mathcal{F} &= \Pi_{\hat{V}} \frac{\mathbf{h}(\hat{\mathbf{u}})}{ 2\sqrt{\hat \rho}} = \frac{\mathbf{h}(\hat{\mathbf{u}})}{ 2\sqrt{\hat \rho}} \\
    \partial_{u_k} \mathcal{F} &= \Pi_{\hat{V}} \sqrt{\hat{\rho}} \, \partial_k \mathbf{h}(\hat{\mathbf{u}}) = \sqrt{\hat \rho} \, \partial_k \mathbf{h}(\hat{\mathbf{u}})
\end{nalign}
The identities hold because $\hat{V} = \mathrm{span} \{\mathbf{h}(\hat{\mathbf{u}}), \partial_1 \mathbf{h}(\hat{\mathbf{u}}), \dots, \partial_d \mathbf{h}(\hat{\mathbf{u}})\}$. Since $\mathrm{dim}(\hat{V}) = d+1$, the Jacobian $\hat{D}_{\mathcal{F}}$, when restricted to $\hat{V}$, is non-singular. With an abuse of notation, we define $\hat{D}_{\mathcal{F}}^{-1}: \hat{V} \rightarrow \mathbb{R}^{d+1}$ as the inverse of $\hat{D}_{\mathcal{F}}$ restricted to $\hat{V}$. The inverse can therefore act on $\mathcal{F}(\rho, \mathbf{u}) - \Pi_{\hat{V}} \tilde{\mathbf{g}} \in \hat{V}$.

We apply the Newton-Kantorovich theorem. Let $\mathbf{z}_0 := \hat{D}_{\mathcal{F}}^{-1} (\mathcal{F}(\rho_0, \mathbf{u}_0) - \Pi_{\hat{V}} \tilde{\mathbf{g}})$ be the first Newton update, i.e., the next point is $(\rho_1, \mathbf{u}_1) = ( \rho_0, \mathbf{u}_0) - \mathbf{z}_0$. Assume the open ball $B \equiv B((\rho_1, \mathbf{u}_1), r)$ with radius $r=\|\mathbf{z}_0\|$ is contained in $\mathbb{R}^+ \times \mathbb{R}^d$.

Let $\beta = \| \hat{D}_{\mathcal{F}}^{-1}\| < \infty$ be the operator norm. Since $\hat{D}_{\mathcal{F}}$ has full column rank, $\beta$ is equal to the operator norm of the pseudoinverse $\hat{D}_{\mathcal{F}}^+$ and related to the smallest singular value $\sigma_{\text{min}} (\hat{D}_{\mathcal{F}}) > 0$ by
$\beta = \|\hat{D}_{\mathcal{F}}^+\| = \frac{1}{\sigma_{\text{min}} (\hat{D}_{\mathcal{F}})}$.

Let $\mu$ be the Lipschitz constant of $D_{\mathcal{F}}$ over $B$. It can be bounded in terms of the second derivatives as $\mu \leq \sup_{B} \| D^2_\mathcal{F}(\rho, \mathbf{u})\|_2$. Here, $D^2_\mathcal{F}(\rho, \mathbf{u})$ is a $q \times (d+1) \times (d+1)$ tensor that can be written in  block form as
\begin{equation}
    D^2_\mathcal{F}(\rho, \mathbf{u}) = 
    \begin{bmatrix}
        - \frac{1}{4\rho^{3/2}}\mathbf{h}(\mathbf{u}) & \frac{1}{2\sqrt{\rho}} \nabla \mathbf{h}(\mathbf{u}) \\
        \frac{1}{2\sqrt{\rho}} \nabla \mathbf{h}(\mathbf{u}) & \sqrt{\rho} \nabla^2 \mathbf{h}(\mathbf{u})
    \end{bmatrix}
\end{equation}
Since $\mathbf{h}$ is quadratic, $\nabla \mathbf{h}(\mathbf{u})$ is a first-order function in $\mathbf{u} - \mathbf{u}_1$ and $\nabla^2 \mathbf{h}(\mathbf{u}) = \mathbf{H}$ a constant Hessian tensor. These allow one to estimate an upper bound of $\mu$.

The key assumption for applying the Newton-Kantorovich theorem is $2\beta \mu r \leq 1$. According to the theorem, if this condition holds, then the Newton iteration converges to a solution $(\rho^\dagger, \mathbf{u}^\dagger)$ in the closed ball $\bar{B}$ of the equation $\Pi_{\hat{V}} \sqrt{\rho} \mathbf{h}(\mathbf{u}) = \Pi_{\hat{V}}  \tilde{\mathbf{g}}$. This implies the tangential error term vanishes at this solution.

We will examine the normal error term evaluated at $(\rho^\dagger, \mathbf{u}^\dagger)$. We write $\mathbf{h}(\mathbf{u}^\dagger) = \mathbf{h}(\hat{\mathbf{u}}) + J \Delta \mathbf{u}^\dagger + \frac{1}{2} (\Delta \mathbf{u}^\dagger)^\top \mathbf{H} \Delta \mathbf{u}^\dagger$, where $\Delta \mathbf{u}^\dagger = (\mathbf{u}^\dagger - \hat{\mathbf{u}})$. Since the first two terms lie in $\hat{V}$, we have
\begin{nalign}
    \mathcal{E}^\perp \leq \| \Pi_{\hat{V}^\perp} \sqrt{\rho^\dagger} \mathbf{h}(\mathbf{u}^\dagger) \| 
    &= \frac{\sqrt{\rho^\dagger}}{2} \| \Pi_{\hat{V}^\perp} (\Delta\mathbf{u}^\dagger)^\top \mathbf{H} \Delta\mathbf{u}^\dagger\| \\
    &\leq \frac{\sqrt{\rho^\dagger}}{2} \|\mathbf{H}\| \|\Delta \mathbf{u}^\dagger \|^2 \\
    &\leq 2\sqrt{\hat \rho + 2r} \|\mathbf{H}\|  r^2.
\end{nalign}
The final inequality holds because both $(\hat \rho, \hat{\mathbf{u}})$ and $(\rho^\dagger, \mathbf{u}^\dagger)$ lie within the ball $\bar{B}$ of radius $r$. Furthermore, this radius is bounded by:
\begin{nalign}
r 
&= \|\hat{D}_{\mathcal{F}}^{-1} \Pi_{\hat{V}} (\sqrt{\hat\rho} \mathbf{h}(\hat{\mathbf{u}}) -  \tilde{\mathbf{g}}) \| \\
&\leq \|\hat{D}_{\mathcal{F}}^{-1}\| \|\Pi_{\hat{V}}\| \Delta, & \Delta = \| \sqrt{\hat\rho} \mathbf{h}(\hat{\mathbf{u}}) -  \tilde{\mathbf{g}} \| \\
&\leq \beta \Delta.
\end{nalign}
Consequently, $\mathcal{E}^\perp \leq 2 \beta^2 \Delta^2 \|\mathbf{H}\| \sqrt{\hat \rho + 2\beta \Delta}$. This bound can be tightened by choosing $\sqrt{\hat{\rho}} = \mathbf{h}(\hat{\mathbf{u}})^\top \tilde{\mathbf{g}} / \|\mathbf{h}(\hat{\mathbf{u}})\|^2$ to minimize $\Delta$. We can then express the residual as $\Delta = \|\tilde{\mathbf{g}}\| \left|\sin\theta\right|$, where the angle $\theta$ is defined by $\cos\theta = \frac{\mathbf{h}(\hat{\mathbf{u}})^\top \tilde{\mathbf{g}}}{\|\mathbf{h}(\hat{\mathbf{u}})\| \|\tilde{\mathbf{g}}\|}$.


\end{document}